\begin{document}

\newtheorem{theorem}{Theorem}
\newtheorem{lemma}{Lemma}
\newtheorem{proposition}{Proposition}
\newtheorem{corollary}{Corollary}
\newtheorem{definition}{Definition}
\newcommand{\ket}[1]{|#1\rangle}
\newcommand{\bra}[1]{\langle #1|}
\newtheorem{fact}{Fact}
\newcommand{\Z}{\mathbb Z} 
\newcommand{\F}{\mathbb F}
\newcommand{\N}{\mathbb N}
\newcommand{\C}{\mathbb C}
\newcommand{\Fqd}{\F_q^{(d)}[x]}
\newcommand{\HSP}{\mbox{\rmfamily\textsc{HSP}}}
\newcommand{\HPP}{\mbox{\rmfamily\textsc{HPP}}}
\newcommand{\HPFGP}{\mbox{\rmfamily\textsc{HPGP}}}
\newcommand{\HQPP}{\mbox{\rmfamily\textsc{HQPP}}}
\newcommand{\HSOP}{\mbox{\rmfamily\textsc{HSSP}}}
\newcommand{\HTP}{\mbox{\rmfamily\textsc{Hidden Translation}}}
\newcommand{\OP}{\mbox{\rmfamily\textsc{Orbit Coset}}}
\newcommand{\Aff}{\mbox{\rm Aff}}
\newcommand{\Fg}{\mbox{\rm Fg}}
\newcommand{\Affq}{{\rm Aff}_q}
\newcommand{\Affp}{{\rm Aff}_p}
\newcommand{\Dihq}{{\rm Dih}({\mathbb F}_q)}
\newcommand{\GL}{\rm GL}
\newcommand{\Dih}{\mbox{\rm Dih}}
\title{Hidden Symmetry Subgroup Problems } 

\date{January 18, 2012}

\author{Thomas Decker\\ 
Centre for Quantum Technologies, National University
    of Singapore, \\ Singapore 117543.
    \texttt{cqttd@nus.edu.sg}
\and
G{\'a}bor Ivanyos\\
Computer and Automation Research Institute \\ of the
    Hungarian Academy of Sciences,  \\ Budapest,
    Hungary. \texttt{Gabor.Ivanyos@sztaki.hu}
\and
Miklos Santha\\
LIAFA, Univ. Paris 7, CNRS,
75205 Paris, France;  and \\
Centre for Quantum Technologies,
National University of Singapore, \\ Singapore 117543. \texttt{miklos.santha@liafa.jussieu.fr}
\and
Pawel Wocjan\\
Department of Electrical
    Engineering and Computer Science, \\ University of Central Florida,
    \\ Orlando, FL 32816-2362, USA. 
    \texttt{wocjan@eecs.ucf.edu}
}

\maketitle

\begin{abstract}
We advocate a new approach of addressing hidden structure problems and
finding efficient quantum algorithms.  We introduce and investigate
the Hidden Symmetry Subgroup Problem (\HSOP{}), which is a
generalization of the well-studied Hidden Subgroup Problem (\HSP{}).
Given a group acting on a set and an oracle whose level sets define a
partition of the set, the task is to recover the subgroup of
symmetries of this partition inside the group.  The $\HSOP$ provides a
unifying framework that, besides the $\HSP$, encompasses a wide range
of algebraic oracle problems, including quadratic hidden polynomial
problems.  While the \HSOP{} can have provably exponential
quantum query complexity, we obtain efficient quantum algorithms for
various interesting cases.  To achieve this, we present a
general method for reducing the \HSOP{} to the \HSP{}, which works
efficiently in several cases related to symmetries of polynomials.
The \HSOP{} therefore connects in a rather surprising way certain
hidden polynomial problems with the \HSP{}.  Using this connection, we
obtain the first efficient quantum algorithm for the hidden polynomial
problem for multivariate quadratic polynomials over fields of constant
characteristic. We also apply the new methods to polynomial function
graph problems and present an efficient quantum procedure
for constant degree multivariate polynomials over any field.
This result improves in several ways 
the currently known
algorithms.
\end{abstract}

\section{Introduction}
The main goal of quantum computing is to identify suitable classes of
problems and to find efficient quantum algorithms for them that
provide a significant speed-up over their classical counterparts. The
vast majority of such examples consists of group-theoretical problems
that can be formulated within the framework of the hidden subgroup
problem ($\HSP$).  This problem can be cast in the following terms: We
are given a finite group $G$ and a black-box function from $G$ to some
finite set. The level sets of the function correspond to the right
cosets of some subgroup $H$. We say that $f$ hides $H$ and the task is
to determine this hidden subgroup. One query of the function counts as
one step in the computation and an algorithm is efficient if its
running time is polynomial in the logarithm of the size of the group.
While no classical algorithm is known to solve this problem with
polynomial query complexity, the problem is computationally solvable
in quantum polynomial time for every abelian
group~\cite{sho97,BoLi,Kitaev}.

Several attempts were made to extend the quantum solution of the
abelian HSP. Most of the research focused on the HSP in non-abelian
groups since these include several algorithmically important problems.
For example, it is known that efficient solutions for the dihedral and
the symmetric group would imply efficient solutions for some lattice
problems~\cite{Regev} and for graph isomorphism, respectively.  While
some progress has been made in this
direction~\cite{BCvD05,DMR10,FIMSS03,gsvv01,hrt03,Kuperberg,MRRS04},
the HSP for the dihedral and symmetric groups remains unsolved.  It is
already known that the methods for solving the abelian case fail for
several non-abelian groups~\cite{MRS,hmrrs}.  The goal of obtaining
efficient quantum algorithms for larger classes of non-abelian groups
turned out to be rather elusive.

Another idea for generalizing the problem was proposed by Childs,
Schulman and Vazirani~\cite{CSV} who considered properties of
algebraic sets hidden by black-box functions. One of these problems is
the hidden polynomial problem ($\HPP$) where the hidden object is a
polynomial. To recover it we have at our disposal an oracle whose
level sets coincide with the level sets of the polynomial. Childs et
al.~\cite{CSV} showed that the quantum query complexity of this
problem is polynomial in the logarithm of the field size provided that
the degree and the number of variables are held constant, leaving the
question of the time complexity as an open question.  The authors also
formulated computationally efficient quantum procedures for some
related problems, such as the hidden radius and the hidden flat of
centers. Nonetheless, to the best of our knowledge, no efficient
quantum polynomial time algorithm has been proposed for the general
$\HPP$, not even for the simplest problem of hidden quadratic
polynomials in one variable $(\HQPP)$.

In~\cite{DDW09}, Decker, Draisma and Wocjan defined a related problem
that we refer to as the hidden polynomial graph problem (\HPFGP) to
distinguish it from the $\HPP$.  Here, similarly to the $\HPP$, the
hidden object is a polynomial, but the oracle is more powerful because
it can also be queried on the graphs that are defined by the
polynomial functions.  They obtained a polynomial time quantum
algorithm that correctly identifies the hidden polynomial when the
degree and the number of variables are considered to be constant.
Their proof applies to all finite fields whose characteristic is not
in a finite set of exceptional characteristics that depend on the
degree of the polynomials.

In this paper, we advocate a third possible approach to find hidden
structures. We consider a group $G$ acting on some finite set $M$, and
we suppose that we have at our disposal a black-box function whose
level sets define a partition of $M$. The object we would like to
recover is the group of symmetries of this partition inside $G$, i.e.,
the subgroup whose orbits under the action coincide with the classes
of the partition. We call this problem the hidden symmetry subgroup
problem ($\HSOP$). It is easy to see that the $\HSP$ is a special case
of the $\HSOP$ when the group acts on itself and the action
corresponds to the group operation. But, for some actions, the $\HSOP$
is provably harder than any $\HSP$.  We show that Grover's search can
be cast as an $\HSOP$, establishing that certain cases of the
$\HSOP$ have exponential quantum query complexity.  This is in
contrast to the $\HSP$ that has polynomial quantum query complexity
for all groups~\cite{EHK04}.

The potential of the $\HSOP$ lies mainly in the possibility of
extending the HSP techniques to more general group actions that still
admit efficient quantum procedures.  We demonstrate the power of this
new approach by designing and improving quantum algorithms for several
algebraic problems.  To achieve this we reduce both the $\HQPP$ and
the univariate $\HPFGP$ to appropriate $\HSOP$s for
which we can give efficient quantum solutions in some interesting
cases.  Besides the construction of efficient algorithms, the
formulation of problems as $\HSOP$ can also shed new light on their
structure. For example, the apparent difficulty of the $\HQPP$ over
prime fields might be explained by the equivalence of this problem to
the $\HSP$ in the dihedral group, a connection discovered via their
relations to the $\HSOP$. It is also worth to note that the hidden
shifted multiplicative character problem of van Dam, Hallgren and
Ip~\cite{vDHI03} is a version of the $\HSOP$ with an additional
promise on the input.

To establish our algorithmic results, we first concentrate on the
question of whether the $\HSOP$ can be reduced in some cases to the
related $\HSP$ that we obtain by forgetting about the action. We
design a reduction scheme, which involves the generalization of bases
known from the theory of permutation groups. We are able to show that
when the action has an efficiently computable generalized base then
the $\HSOP$ is indeed efficiently reducible to the related $\HSP$
({\bf Proposition~\ref{prop:reduction}}).  Then we describe a
probabilistic construction of such bases for a large class of
Frobenius groups. Therefore, the above reduction applies to these
groups ({\bf Theorem~\ref{theorem:frobenius-reduction}}).  These
groups include among others a large variety of affine groups and the
$\HSOP$ is efficiently solvable for these groups by a quantum
algorithm. We remark that in \cite{MRRS07} it is proved that the
$\HSOP$ (in a slightly different formulation) can be solved
efficiently for some of these affine groups. The proof uses
essentially the same reduction technique.

We then establish several surprising connections between hidden
polynomial problems and the $\HSOP$.  In fact, the $\HQPP$ turns out
to be equivalent in a very strong sense to the $\HSOP$ over a related
affine group.  Combined with the above reduction to the related
$\HSP$, we are able to give the first ever quantum polynomial time solution for the
$\HQPP$ over fields of constant characteristic ({\bf
  Theorem~\ref{theorem:hqp-small}}).  We then give a quantum reduction
of the multivariate quadratic $\HPP$ to the $\HQPP$, which implies
that over fields of constant characteristic this multivariate problem
is also solvable in quantum polynomial time ({\bf
  Theorem~\ref{theorem:multivariate}}).

Finally, for dealing with the $\HPFGP$, we define a class of
semidirect product groups which we call function graph groups. We show
that the $\HPFGP$ for univariate polynomials of degree at most $d$
coincides with the $\HSOP$ over a corresponding function graph
group. These groups turn out to have a base of size $d$, and therefore
our general reduction to the related $\HSP$ applies ({\bf
  Theorem~\ref{theorem:hpgptohsp}}).  Based on this reduction, we
improve the results of~\cite{DDW09} by showing that there is a quantum
polynomial time algorithm for the $\HPFGP$ over every field 
when the degree of the polynomials is
constant ({\bf Theorem~\ref{theorem:new-reduction}}).

\section{Preliminaries}
We first fix some useful notation: $n$ denotes a positive integer, $p$
a prime number, $q$ a prime power, $\Z_n$ the additive group of
integers modulo $n$, $\F_q$ the finite field of size $q$, and $\Fqd$
the set of univariate polynomials of degree at most $d$ over $\F_q$.

\subsection{Level sets and problem classes}
Simply speaking, we study the general problem of determining hidden
objects related to a given algebraic structure.  The algebraic
structure is specified by parameters of the problem, which are finite
groups, families of subgroups of a given group, group actions, finite
fields, and integers in the present case.  We assume that we have
access to an unknown member of a family of black-box functions $f : A
\to S$, where $A$ is part of the structure and $S$ is some finite set.
We consider this function $f$ as the oracle input.  We are restricted
to identifying the hidden object solely from the information we obtain
by querying the oracle $f$. In fact, the only useful information we
can obtain is the structure of the {\em level sets} $f^{-1}(s)=\{ a
\in A: f(a)=s\}$, $s\in S$, that is, we can only determine whether two
elements in $A$ are mapped to the same value or not. All non-empty
level sets together constitute a partition of $A$ which we denote by
$\pi_f$.

\begin{definition}\label{def 1}
{\rm
The {\em hidden subgroup problem} $\HSP$ is parametrized by a finite group $G$ and
a family ${\cal H}$ of subgroups of $G$.\\
\vbox{\begin{quote}
\HSP$(G, {\cal H})$ \\
{\em Oracle input:} 
A function $f$ from $G$ to some finite set $S$ such that for some subgroup $H \in {\cal H}$, we have
$f(x) = f(y) \Longleftrightarrow Hx = Hy.$\\
{\em Output:} $H$.
\end{quote}}\\
The {\em hidden polynomial problem} $\HPP$ is parametrized by a finite field $\F_q$ and two positive integers $n$ and $d$.\\
\vbox{\begin{quote}
\HPP$(\F_q, n,d)$.\\
{\em Oracle input:} 
A function $f$ from $\F_q^n$ to some finite set $S$ 
such that for some $n$-variate polynomial ${\cal P}$ of degree $d$ over $\F_q$, we have
$f(x) = f(y) \Longleftrightarrow {\cal P}(x) = {\cal P}(y).$\\
{\em Output:} ${\cal P}$.
\end{quote}}\\
For every $u \in \F_q$ we define a monic quadratic polynomial over $\F_q$ by 
${\cal P}_u(x) = x^2 -2ux$.
The {\em hidden quadratic polynomial problem} $\HQPP$ is parametrized by some finite field ${\F_q}$.\\
\vbox{\begin{quote}
\HQPP$(\F_q)$.\\
{\em Oracle input:} 
A function $f$ from $\F_q$ to some finite set $S$
such that we have 
$f(x) = f(y) \Longleftrightarrow {\cal P}_u(x) = {\cal P}_u(y).$\\
{\em Output:} ${\cal P}_u$.
\end{quote}}\\
The {\em hidden polynomial graph problem} $\HPFGP$ is parametrized 
by a finite field $\F_q$ and two positive integers $n$ and $d$.\\
\vbox{\begin{quote}
\HPFGP$(\F_q, n,d)$.\\
{\em Oracle input:} 
A function $f$ from $\F_q^n \times \F_q$ to a finite set $S$ 
such that for some $n$-variate polynomial $Q$ of degree $d$ over $\F_q$ we have
$f(x_1,y_1) = f(x_2, y_2) \Longleftrightarrow y_1 - Q(x_1) = y_2 - Q(x_2).$\\
{\em Output:} $Q$.
\end{quote}}\\
In all these problems we say that the input $f$ {\em hides} the output
of the problem.  }
\end{definition}

In the definition of the HQPP we restrict our attention to monic
polynomials with zero constant term because adding a constant to a
polynomial or multiplying all coefficients with the same non-zero
constant do not change the partition $\pi_f$.  Furthermore, observe
that for the $\HPFGP$ we have a more powerful oracle at our disposal
than for the $\HPP$, because an $\HPFGP$ oracle $f(x,y)$
restricted to $y = 0$ is equivalent to an $\HPP$ oracle.

In all these problems the task is to determine the output hidden by
the oracle input.  We measure the time complexity of an algorithm by
the overall running time when a query counts as one computational
step.  An algorithm is {\em efficient} if its time complexity is
polynomial in the logarithm of the size of the group or field, and in
the size of the integers in unary in the parametrization of the
problem. 

\subsection{Semidirect product groups}
Let $K$ and $H$ be finite groups and let $\phi:h\mapsto \phi_h$ be
a homomorphisms from $H$ to the group of automorphisms
of $K$. Then the semidirect product $K\rtimes_\phi H$
is the cartesian product of $K$ and $H$ equipped
with the multiplication defined as
$(k,h)\cdot (k',h')=(k\cdot\phi_h(k'),h\cdot h')$.
We use the notation $K\rtimes H$ for $K\rtimes_\phi H$ whenever $\phi$
is clear from the context.

\subsection{Group actions and partitions}

A {\em left permutation action} of a group $G$ on a set $M$ is a
binary function $ \circ: G \times M \to M\,, $ where we denote $\circ
(g,m)$ by $g \circ m$, which for all $g,h \in G$ and $m \in M$
satisfies $ g \circ (h \circ m) = (gh) \circ m $ {\rm and} $ e \circ m
= m $ for the identity element $e$ of $G$. For a subset $L \subseteq
M$ we set $g \circ L = \{g \circ m : m \in L\}.$ The {\em stabilizer}
subgroup $G_m$ of $m$ is defined as $\{g \in G : g \circ m = m\},$ it
consists of the elements in $G$ which fix $m$.  The action $\circ$ is
{\em faithful} if $\bigcap_{m \in M} G_m = \{e\}$.  Throughout the
paper we assume faithfulness.  If $G$ acts on $M$, then every subgroup
$H$ of $G$ acts also naturally on $M$.  The $H$-{\em orbit} of $m \in
M$ is the subset of $M$ where $m$ can be moved to by elements of $H$,
formally $H \circ m = \{h \circ m : h\in H\}.$

The $H$-orbits form a partition $H^* =\{ H \circ m : m \in M \}$  of $M$.
For a partition $\pi=\{\pi_1, \ldots,
\pi_\ell\}$ of the set $M$, we define the subgroup
$
\pi^*   =\{ g \in G : (\forall i) \; g \circ \pi_i = \pi_i \}.
$
We call $\pi^* \leq G$ the group of {\em symmetries} of $\pi$ within $G$.
This is the subgroup of elements that stabilize every class of the
partition $\pi$ under the given action.  Let $({\cal S}(G),
\subseteq)$ be the lattice of subgroups of $G$ under the inclusion
relation, and let $(\Pi(M), \leq)$ be the lattice of partitions of
$M$, where by definition $\pi \leq \pi^\prime$ if $\pi'$ is finer than
$\pi$.  The maps $H \mapsto H^*$ and $\pi \mapsto \pi^*$ define an
order-reversing Galois connection between $({\cal S}(G), \subseteq)$
and $(\Pi(M), \leq)$, that is
$H \leq \pi^*$ if and only if $\pi \subseteq H^*$.
The subgroup $H^{**}$  is the {\em closure} of $H$~\cite{Blyth}, it consists of 
the elements in $G$ which stabilize every $H$-orbit. The {\em closure} of a partition $\pi$ is
$\pi^{**}$, it consists of the orbits of its group of symmetries. 
It is always true that $H \subseteq H^{**}$.
The subgroup $H$
 is {\em closed} if $H = H^{**}$,
or equivalently, there exists a partition $\pi$ such that $H = \pi^{*}$. Similarly, 
$\pi$ is {\em closed} if $\pi = \pi^{**}$.
We denote by ${\cal C}(G)$ the family of all closed subgroups in $G$.

\subsection{The hidden symmetry subgroup problem}

\begin{definition}
{\rm
The {\em hidden symmetry subgroup problem} $\HSOP$ is parametrized by a finite group $G$,
a finite set $M$, an action $\circ: G \times M \to M$  of $G$ on $M$, and
a family ${\cal H}$ of closed subgroups of $G$.\\
\vbox{\begin{quote}
\HSOP$(G, M, \circ, {\cal H})$.\\
{\it Oracle input:} 
A function $f$ from $M$ to some finite set $S$ such that for some subgroup $H \in {\cal H}$, we have
$f(x) = f(y) \Longleftrightarrow H \circ x = H \circ y.$\\
{\it Output:} $H$.
\end{quote}}
}
\end{definition}

In general, there can be several subgroups whose orbits coincide with
the level sets of $f$, but the closures of these subgroups are the
same. The unique closed subgroup that satisfies the promise is
$\pi_f^*$, and this is exactly the output of the problem. We will say
that $f$ {\em hides $H$ by symmetries}.  In fact, it would be natural
to extend \HSOP{} to the more general setting where $f$ is an
arbitrary function on $M$ and the task is to determine the (closed)
subgroup $\pi_f^*$. The restriction we use in this paper is that
$\pi_f$ is a closed partition with $\pi_f^*\in {\cal H}$.
We define an algorithm for solving the HSSP as {\em efficient} if 
it is polylogarithmic in $|G|$.

It is easy to see that the HSP is a special case of the HSSP when we
set $M=G$ and choose the group action $\circ$ to be the group
operation, that is $g \circ h = gh$.  For this action every subgroup
of $G$ is closed, and a function $f$ hides a subgroup $H$ if and only
if $f$ hides $H$ by symmetries.

Given \HSOP$(G, M, \circ, {\cal H})$, by forgetting about the group
action we obtain \HSP$(G, {\cal H})$.  We call this problem the {\em
  related} $\HSP$.

\subsection{Related results}
While the $\HSP$ is generally hard in non-abelian groups, its query
complexity is always small, due to a classical result of Ettinger,
H{\o}yer and Knill~\cite{EHK04}.
\begin{fact}\label{fact:ettinger}
For every finite $G$, the $\HSP(G, {\cal C}(G))$ has polynomial query complexity.
\end{fact}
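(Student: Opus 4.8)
The plan is to establish this as a statement about quantum \emph{query} complexity (not time), via quantum state discrimination, following the information-theoretic strategy of Ettinger, H\o yer and Knill. A single oracle query produces a random coset state: starting from $\frac{1}{\sqrt{|G|}}\sum_{g\in G}\ket{g}\ket{0}$, evaluating $f$ into the second register and measuring it collapses the first register to $\frac{1}{\sqrt{|H|}}\sum_{h\in H}\ket{hg_0}$ for a uniformly random right coset $Hg_0$. Averaging over the outcome, one query prepares the mixed state $\rho_H=\frac{1}{|G|}\sum_{h\in H}L_h$, where $L_h\ket{x}=\ket{hx}$; equivalently $\rho_H=\frac{1}{[G:H]}P_H$ with $P_H=\frac{1}{|H|}\sum_{h\in H}L_h$ the orthogonal projector onto the $[G:H]$-dimensional space of left-$H$-invariant functions. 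With $k$ queries we hold $\rho_H^{\otimes k}$, and the problem becomes identifying the hidden $H$ among the candidate subgroups. I would first bound their number: every subgroup is generated by at most $\log_2|G|$ elements, so there are at most $|G|^{\log_2|G|}=2^{(\log_2|G|)^2}=:N$ of them.

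The core of the proof is a \emph{uniform} pairwise-distinguishability bound: for any two distinct subgroups $H_1\ne H_2$, the Uhlmann fidelity satisfies $F(\rho_{H_1},\rho_{H_2})\le 1/\sqrt2$. Using the Hilbert--Schmidt orthonormality of $\{L_g/\sqrt{|G|}\}_{g\in G}$ together with $\mathrm{Tr}(L_g)=|G|\,[g=e]$, a short computation gives $\mathrm{Tr}(P_{H_1}P_{H_2})=ab/[G:H_1\cap H_2]$, where $a=[G:H_1]$ and $b=[G:H_2]$. Expressing the fidelity through the principal angles $\theta_i$ between the two invariant subspaces, $F=\frac{1}{\sqrt{ab}}\sum_i\cos\theta_i$ with $\sum_i\cos^2\theta_i=\mathrm{Tr}(P_{H_1}P_{H_2})$ and at most $\min(a,b)$ angles, Cauchy--Schwarz yields $F\le\sqrt{\min(a,b)/[G:H_1\cap H_2]}$. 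Since $H_1\ne H_2$, one of them is not contained in the other, say $H_1\not\subseteq H_2$; then $[H_1:H_1\cap H_2]\ge2$, so $[G:H_1\cap H_2]\ge2a\ge2\min(a,b)$ and $F\le1/\sqrt2$.

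I expect this fidelity bound to be the main obstacle --- not for its depth but because the obvious estimate is far too weak. Lower-bounding the trace distance by the Hilbert--Schmidt norm only gives $\|\rho_{H_1}-\rho_{H_2}\|_1\ge\sqrt{|H_1\triangle H_2|/|G|}$, hence $F\le 1-\Theta(1/|G|)$, which would force \emph{exponentially} many copies (for instance, two distinct small subgroups have nearly maximally mixed, hence norm-close, states). The resolution is to exploit the projector structure through the principal angles rather than a generic norm inequality, obtaining a \emph{constant} gap independent of $|G|$, and to check that the extremal case is exactly an index-$2$ containment.

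Finally I would assemble the pieces. Since fidelity is multiplicative over tensor products, $F(\rho_{H_1}^{\otimes k},\rho_{H_2}^{\otimes k})=F(\rho_{H_1},\rho_{H_2})^k\le 2^{-k/2}$. A standard bound for discriminating a uniform ensemble of $N$ states (via the pretty good measurement) then provides a measurement identifying the hidden subgroup with error at most $\sum_{H_1\ne H_2}F(\rho_{H_1}^{\otimes k},\rho_{H_2}^{\otimes k})\le N^2\,2^{-k/2}$. Taking $k=\Theta((\log|G|)^2)$ drives this below $1/3$, so $O((\log|G|)^2)$ queries suffice. As the discriminating measurement need not be implementable in polynomial time, this bounds the query complexity only, precisely as the statement asserts.
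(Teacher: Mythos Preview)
The paper does not supply its own proof of this statement: it is recorded as a Fact and attributed to Ettinger, H\o yer and Knill~\cite{EHK04}, with no argument given. Your proposal is therefore not competing with anything in the paper; you are reconstructing (a variant of) the cited result.

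The substantive part of your argument is correct. The identity $\mathrm{Tr}(P_{H_1}P_{H_2})=|G|\,|H_1\cap H_2|/(|H_1|\,|H_2|)$ is right, the Cauchy--Schwarz step through the principal angles is valid, and the index argument giving $F(\rho_{H_1},\rho_{H_2})\le 1/\sqrt{2}$ whenever $H_1\ne H_2$ is clean and sharp (the extremal case is indeed an index-two containment). The subgroup count $|G|^{\log_2|G|}$ is also fine. The only place that deserves tightening is the final discrimination step: the inequality ``PGM error $\le\sum_{H_1\ne H_2}F(\rho_{H_1}^{\otimes k},\rho_{H_2}^{\otimes k})$'' is not one of the textbook statements verbatim, so you should either cite the precise bound you have in mind (a Barnum--Knill or Montanaro-type estimate for the pretty good measurement in terms of pairwise fidelities) or replace it by any of the standard alternatives. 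This does not affect the conclusion, since every such bound, combined with your constant fidelity gap and the multiplicativity of $F$ under tensor powers, yields $k=O((\log|G|)^2)$ queries.
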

Among groups where the $\HSP$ is solvable in quantum polynomial time, some affine groups will be of importance for us.
For a subgroup $H$ of $\F_q^*$, let $\Affq(H)$ denote the semidirect product 
$\F_q \rtimes H$, and let ${\cal FC}$ be the family of 
conjugates of $H$ by an element of $\F_q$ 
(for a detailed discussion
of these groups see Section~\ref{section:affine}). The following positive results on the solvability of the $\HSP$ were obtained 
respectively by Moore et al.~\cite{MRRS04} and Friedl et al.~\cite{FIMSS03}.
\begin{fact}\label{fact:friedl}
The following cases of the $\HSP$
can be solved in polynomial time:
\begin{description}
\item[(a)]
$\HSP(\Affq(H), {\cal FC})$, where
$q$ is a prime and $H \leq \F_q^*$ such that $1 < |H| \leq q-1$ 
and $|H| = \Omega (q/ {\rm polylog}(q))$.
\item[(b)]
$\HSP(G,{\cal  C}(G))$, where 
$G$ is a finite group such that $G'$ is commutative
and every element of $G'$ has an order bounded by a constant.
\end{description}
\end{fact}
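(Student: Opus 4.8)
Both parts of Fact~\ref{fact:friedl} assert time (not merely query) complexity bounds, so unlike Fact~\ref{fact:ettinger} they require genuine algorithms, and I would reconstruct them from \cite{MRRS04} and \cite{FIMSS03} respectively. In both cases the skeleton is the standard coset-state method: prepare $\frac{1}{\sqrt{|G|}}\sum_{g\in G}\ket{g}\ket{f(g)}$, measure the second register to collapse onto a uniformly random coset state $\frac{1}{\sqrt{|H|}}\sum_{h\in H}\ket{gh}$, apply the quantum Fourier transform over $G$, and sample. The whole art lies in what one extracts from the Fourier samples.

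For part (a), write $G=\Affq(H)=\F_q\rtimes H$ with $q$ prime. Its derived subgroup is the full translation part $\F_q$, so $G/G'\cong H$ accounts for $|H|$ one-dimensional characters, while Clifford theory over the nontrivial $H$-orbits on the dual of $\F_q$ produces $(q-1)/|H|$ irreducible representations, each of dimension $|H|$. The subgroups in ${\cal FC}$ are exactly the $q$ point stabilizers of the affine action on $\F_q$, so identifying the hidden subgroup amounts to recovering a single field element. Since these subgroups are mutually conjugate, weak Fourier sampling (the representation name alone) carries no information about which one is hidden: the entire signal lives inside the $|H|$-dimensional representations and is exposed only by strong Fourier sampling, i.e.\ by measuring an index of the sampled representation in a carefully chosen orthonormal basis. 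Following \cite{MRRS04} I would use a random basis and show that, conditioned on landing in a fixed high-dimensional irreducible, each outcome probability depends on the hidden point through an exponential (character) sum, so that $O(\mathrm{polylog}\,q)$ independent samples yield a system of constraints from which the point is reconstructed classically.

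For part (b), I would instead pass to Orbit Coset ($\OP$), the common generalization of the $\HSP$ and of Hidden Translation ($\HTP$), and exploit its self-reducibility along the abelian normal subgroup $N=G'$: an instance over $G$ reduces to one over $N$ and one over the quotient $G/N$, with the solution of the former supplying the orbit-representative data needed for the latter. The quotient $G/G'$ is abelian, so its contribution is a plain abelian $\HSP$, solvable for every abelian group. The normal factor $G'$ is abelian as well, but here the reduction produces genuine $\HTP$ instances, since distinguishing $H$-cosets that project to the same coset of $HG'/G'$ amounts to finding a translation inside $G'$. These $\HTP$ instances are solvable in polynomial time precisely because $G'$ has bounded exponent: over $\F_p^n$ with $p$ bounded one has an efficient $\HTP$ algorithm, whereas over a large cyclic group $\HTP$ is as hard as the dihedral $\HSP$. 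Assembling the two layers then yields generators of the hidden closed subgroup.

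The main obstacle in both parts lies in the base cases rather than in the reductions. In (a) it is the quantitative strong-Fourier-sampling estimate: one must prove that a single random-basis measurement carries $\Omega(1/\mathrm{polylog}\,q)$ distinguishing information about the hidden point, and this is exactly where the hypothesis $|H|=\Omega(q/\mathrm{polylog}\,q)$ is consumed---when $|H|$ is small the conjugates become too numerous relative to the information obtained per sample. In (b) it is the bounded-exponent $\HTP$ solver, together with the verification that the self-reduction only ever generates $\HTP$ instances of bounded exponent; the hypothesis that every element of $G'$ has constant-bounded order is indispensable, since dropping it reintroduces the hard cyclic case of Hidden Translation.
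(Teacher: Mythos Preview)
The paper does not prove Fact~\ref{fact:friedl}; it states the two assertions as results imported from the literature, attributing (a) to Moore, Rockmore, Russell and Schulman~\cite{MRRS04} and (b) to Friedl, Ivanyos, Magniez, Santha and Sen~\cite{FIMSS03}, and then uses them as black boxes. So there is no ``paper's own proof'' to compare against beyond those citations.

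Your sketch is a faithful reconstruction of what those cited papers actually do. For (a), the representation-theoretic count (one-dimensional characters from $G/G'\cong H$ and $(q-1)/|H|$ irreducibles of dimension $|H|$), the observation that weak sampling is useless on a conjugacy class of subgroups, and the use of strong Fourier sampling in a well-chosen basis with the large-$|H|$ hypothesis controlling the information gain per sample, are exactly the ingredients of~\cite{MRRS04}. For (b), passing to Orbit Coset, self-reducing along the abelian normal subgroup $G'$, solving the abelian quotient directly, and handling the $G'$ layer via Hidden Translation in $\Z_p^n$ with $p$ bounded, is precisely the architecture of~\cite{FIMSS03}; your identification of the bounded-exponent hypothesis as the point at which the $\HTP$ base case becomes tractable is correct. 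One small wrinkle: the $\HTP$ algorithm of~\cite{FIMSS03} runs in time polynomial in $p$ (not $\log p$), which is why the constant bound on element orders in $G'$ is genuinely needed, as you say.
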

The query complexity of the \HPP{} was investigated by
Childs, Schulman and Vazirani~\cite{CSV}. They showed
the following.
\begin{fact}\label{fact:childs}
If $n\geq 2$ and $d$ are constants, then for an $1-o(1)$ fraction of
the hidden polynomials, \HPP$(\F_q,n,d)$ has polylogarithmic
query complexity. 
\end{fact}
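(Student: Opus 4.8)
Since Fact~\ref{fact:childs} asserts a bound on \emph{query} complexity rather than time complexity, the plan is information-theoretic: exhibit a quantum procedure that uses one query per copy to prepare copies of a state determined by the hidden polynomial, and then argue that for a generic $\mathcal{P}$ these states are pairwise far apart, so that few copies suffice to identify $\mathcal{P}$. First I would fix the state we can prepare. Querying $f$ on the uniform superposition over $\F_q^n$ and discarding the value register produces, since the level sets of $f$ coincide with those of $\mathcal{P}$, the density operator
\[
\rho_{\mathcal{P}} = \frac{1}{q^n}\sum_{\mathcal{P}(x)=\mathcal{P}(y)}\ket{x}\bra{y} = \frac{1}{q^n}\sum_{s}\big|\mathcal{P}^{-1}(s)\big|\,\ket{\phi_s}\bra{\phi_s},
\]
where $\ket{\phi_s}$ is the normalized uniform superposition over the level set $\mathcal{P}^{-1}(s)$. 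This operator depends only on the partition, i.e. on $\mathcal{P}$ up to the value-relabelings that preserve level sets (adding a constant, scaling, or any injective reparametrization of the image). Thus $k$ queries yield $\rho_{\mathcal{P}}^{\otimes k}$, and the task becomes to identify $\mathcal{P}$, up to this unavoidable equivalence, from copies of $\rho_{\mathcal{P}}$.

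Second, I would count the candidates. There are at most $q^{\binom{n+d}{n}}=q^{O(1)}$ polynomials of degree at most $d$, so the number $N$ of distinguishable candidate partitions satisfies $\log N = O(\log q)$ for constant $n,d$. I would then reduce identification to pairwise distinguishing: run, for each unordered pair of candidates, the optimal two-outcome (Helstrom) test on $\rho_{\mathcal{P}}^{\otimes k}$, and output the candidate that wins all of its tests. Since the true state really is $\rho_{\mathcal{P}}$, it suffices that $\mathcal{P}$ beats every other candidate; by the multiplicativity of fidelity, if $F(\rho_{\mathcal{P}},\rho_{\mathcal{Q}})\le F_0$ for all $\mathcal{Q}\ne \mathcal{P}$, then each such test errs with probability at most $\tfrac12 F_0^{\,k}$, and a union bound over the $N-1$ tests together with the choice $k=O(\log N/\log(1/F_0))$ drives the total error below a constant. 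With $\log N=O(\log q)$ this gives polylogarithmically many queries (indeed $O(1)$ queries once $F_0=o(1)$).

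Third, and this is the crux, I must show that a $1-o(1)$ fraction of polynomials $\mathcal{P}$ are \emph{good}, meaning $F(\rho_{\mathcal{P}},\rho_{\mathcal{Q}})$ is small for every inequivalent candidate $\mathcal{Q}$. The natural handle is the Hilbert--Schmidt overlap
\[
\mathrm{tr}(\rho_{\mathcal{P}}\rho_{\mathcal{Q}}) = \frac{1}{q^{2n}}\,\#\{(x,y): \mathcal{P}(x)=\mathcal{P}(y)\ \text{and}\ \mathcal{Q}(x)=\mathcal{Q}(y)\},
\]
which I would compare with the purity $\mathrm{tr}(\rho_{\mathcal{P}}^2)=q^{-2n}\#\{(x,y):\mathcal{P}(x)=\mathcal{P}(y)\}$. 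For a generic degree-$d$ polynomial in $n\ge 2$ variables a nonempty level set is a hypersurface with $\approx q^{n-1}$ points, so $\mathrm{tr}(\rho_{\mathcal{P}}^2)\approx q^{-1}$, whereas the simultaneous-collision set is cut out by the two independent equations $\mathcal{P}(x)-\mathcal{P}(y)=0$ and $\mathcal{Q}(x)-\mathcal{Q}(y)=0$ in $2n$ variables, hence has $\approx q^{2n-2}$ points and $\mathrm{tr}(\rho_{\mathcal{P}}\rho_{\mathcal{Q}})\approx q^{-2}$. Thus the supports are nearly orthogonal relative to the weight of the states, which is exactly the regime $F_0=o(1)$; here the hypothesis $n\ge 2$ is essential, since for $n=1$ the level sets have bounded size, both $\rho_{\mathcal{P}}$ and $\rho_{\mathcal{Q}}$ are $O(q^{-1})$-close to the maximally mixed state, and the argument breaks down, consistent with the open status of the one-variable case.

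Finally, two points need care, and the first is where the real work lies. To control $\#\{(x,y):\ldots\}$ for the relevant $1-o(1)$ fraction of pairs I would invoke the Lang--Weil point-counting estimate (or Weil-type character-sum bounds) whenever the associated varieties have the expected dimension and are geometrically irreducible, and then show, by a union bound combined with a second-moment argument over random $\mathcal{P}$, that the degenerate polynomials — those whose difference polynomial $\mathcal{P}(x)-\mathcal{P}(y)$ factors abnormally, or which are affinely related to too many candidates — form only an $o(1)$ fraction. The second point is that the Hilbert--Schmidt overlap must still be turned into a fidelity (or trace-distance) bound usable in the Helstrom tests; since $\rho_{\mathcal{P}}$ has rank $\approx q$ with near-uniform spectrum, I would bound $F$ through the overlap of the supporting subspaces $\mathrm{span}\{\ket{\phi_s}\}$, using the collision estimate to show this overlap is $o(1)$. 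I expect the algebraic-geometric counting in the genericity step to be the main obstacle; the quantum-information part (state preparation, tournament, copy counting) is standard. Because only query complexity is asserted, the distinguishing measurement need not be efficiently implementable, which is precisely what lets us sidestep the hard computational problem of actually reconstructing $\mathcal{P}$.
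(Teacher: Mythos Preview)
The paper does not prove Fact~\ref{fact:childs}; it is quoted in the ``Related results'' subsection as a result of Childs, Schulman and Vazirani~\cite{CSV}, with no argument given. So there is no in-paper proof to compare your proposal against.

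That said, your outline is broadly the right shape and is close in spirit to what \cite{CSV} actually does: prepare, with one query each, copies of the mixed state supported on the level-set superpositions of $\mathcal P$, and then argue information-theoretically that for a generic $\mathcal P$ these states are far from every competitor, so $\mathrm{polylog}(q)$ copies suffice. Where you phrase the distinguishability step via a Helstrom tournament and Hilbert--Schmidt overlaps, \cite{CSV} works more directly with Fourier sampling on the level-set states and controls the resulting distributions using Weil-type bounds on exponential sums over the hypersurfaces $\mathcal P(x)=s$; the ``$1-o(1)$ fraction'' caveat is exactly the genericity needed to make those point-count/character-sum estimates kick in. Your identification of the two delicate points is accurate: the algebraic-geometric counting (Lang--Weil or Weil) is where the real content lies, and converting a Hilbert--Schmidt overlap of order $q^{-2}$ into a usable fidelity bound does require an extra step, since $\mathrm{tr}(\rho_{\mathcal P}\rho_{\mathcal Q})$ alone does not control $F(\rho_{\mathcal P},\rho_{\mathcal Q})$ for high-rank states without also using that the spectra are nearly flat of rank $\approx q$. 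Neither of these is a gap in the plan, but both would have to be made precise to turn the sketch into a proof.
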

Here, like in the case of the \HQPP{}, polynomials are determined
up to constant terms and scalar factors. We are not aware of any
results regarding the quantum computational complexity even 
in the univariate quadratic case. 
For the \HPFGP{}, Decker, Draisma and Wocjan~\cite{DDW09} showed
the following.
\begin{fact}\label{fact:decker}
{\bf (a)} \HPFGP$(\F_q, n,d)$ can be reduced in polynomial time to
\HPFGP$(\F_q, 1,d)$ for every constant $n$. {\bf (b)} 
For every $d$ there exists a finite set $E_d$ of primes such that if 
$d$ is constant  and the
characteristic of $\F_q$ is not in $E_d$ then \HPFGP$(\F_q, 1,d)$ can
be solved in quantum polynomial time.
\end{fact}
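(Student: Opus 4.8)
The plan is to prove the two parts by separate means: part (a) is a purely classical reduction by restriction to lines, whereas part (b) is the quantum heart of the argument. For part (a), the natural idea is to restrict the hidden polynomial to lines through the origin. Given the multivariate oracle $f$ on $\F_q^n\times\F_q$ hiding $Q$, and a fixed direction $b\in\F_q^n$, I would define $g_b(s,y)=f(sb,y)$ for $s,y\in\F_q$. Then $g_b(s_1,y_1)=g_b(s_2,y_2)\iff y_1-Q(s_1b)=y_2-Q(s_2b)$, so $g_b$ is a legitimate $\HPFGP(\F_q,1,d)$ oracle hiding the univariate restriction $R_b(s):=Q(sb)$, whose coefficient of $s^k$ is the value $Q_k(b)$ of the degree-$k$ homogeneous component $Q_k$ of $Q$. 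Solving the univariate instance for enough directions $b$ and interpolating each $Q_k$ reconstructs $Q$ up to its constant term, which is invisible to the oracle and fixed to $0$ by convention. Since $Q$ has $\binom{n+d}{d}$ coefficients, which is polynomial in $d$ for constant $n$, polynomially many univariate calls followed by linear-algebraic post-processing suffice; the number of available directions is about $q^{n-1}$, so interpolation is possible for all but finitely many small $q$ per fixed $d$, and those residual corner cases need only a minor separate argument.

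For part (b) the first step is to manufacture graph states. Querying $f$ on the uniform superposition over $\F_q\times\F_q$ and measuring the output register collapses the first two registers to a \emph{graph state} $\frac{1}{\sqrt q}\sum_{x}\ket{x}\ket{Q(x)+v}$ for an unknown, uniformly random shift $v$. Applying the quantum Fourier transform of $\F_q$ to the second register and measuring it returns a uniform frequency $\xi$ and leaves the \emph{phase state} $\ket{\phi_\xi}=\frac{1}{\sqrt q}\sum_x \chi(\xi Q(x))\ket{x}$, where $\chi$ is a fixed nontrivial additive character of $\F_q$; crucially the dependence on the nuisance parameter $v$ is only a global phase and disappears. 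The problem thus reduces to reading the coefficients of $Q$ out of the phase function $x\mapsto \xi Q(x)$.

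The core step is to peel off the coefficients by degree reduction. Using two copies of $\ket{\phi_\xi}$ together with a shift $\ket{x}\mapsto\ket{x+t}$ and a controlled phase operation, one can prepare states whose phase is the finite difference $\Delta_t Q(x)=Q(x+t)-Q(x)$, a polynomial of degree $d-1$ in $x$ with leading coefficient $d\,a_d\,t$, where $a_d$ is the leading coefficient of $Q$. Iterating $d-1$ such differences collapses the phase to a \emph{linear} function of $x$ whose slope is proportional to $d!\,a_d$, which a single Fourier measurement detects; recursing on the lower-degree remainder then yields $a_{d-1},\dots,a_1$. This is precisely where the exceptional characteristics enter: the extraction divides by integer factors such as $d!$ and the binomial coefficients generated by the finite differences, so one must exclude the finite set $E_d$ of primes dividing these quantities.

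The hardest part will be turning this heuristic degree-reduction into a genuine measurement procedure with provably bounded error. Concretely, I expect the main obstacles to be controlling the Weil-type character sums that govern the success probability of each Fourier measurement, guaranteeing that the random frequency $\xi$ is usable (i.e. $\xi\neq 0$ and the resulting linear systems for the coefficients are nondegenerate) with high probability, and verifying that the classical reconstruction of $Q$ from the measured leading coefficients is efficient. These estimates are exactly what force the restriction $\mathrm{char}(\F_q)\notin E_d$ and the constancy of $d$, and they constitute the technical core of the result.
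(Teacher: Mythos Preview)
This statement is a \emph{Fact} in the paper---a result quoted from \cite{DDW09} without proof. There is therefore no proof in the paper to compare your attempt against; the paper simply cites the result and then \emph{improves} on it by different means.

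That said, your sketch is broadly reasonable as a reconstruction of \cite{DDW09}, with one caveat per part. For part~(a), your restriction-to-lines idea is exactly right, and in fact the present paper gives its own sharpened version of this reduction in Section~\ref{hpfgp} (Lemmas~\ref{lemma:reduce} and~\ref{lemma:vand}, Theorem~\ref{theorem:new-reduction}): substitute $x_i\mapsto v_i x$ for carefully chosen directions $v$, solve the resulting univariate instances, and invert a generalized Vandermonde system. The paper's contribution here is to make the interpolation step fully explicit and to drop the assumption that $n$ is constant, getting polynomial dependence on $\binom{d+n}{n}$ rather than merely ``polynomial for constant $n$''. Your remark about small $q$ requiring a separate argument is exactly the issue the generalized Vandermonde construction is designed to handle.

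For part~(b), your outline (coset state $\to$ Fourier transform $\to$ phase state $\to$ finite-difference degree reduction) is the right shape for \cite{DDW09}, and your identification of where the exceptional primes $E_d$ come from is correct. But be aware that the present paper does \emph{not} reprove this; instead it takes a structurally different route. It recasts $\HPFGP(\F_q,1,d)$ as an $\HSOP$ over a function graph group (Proposition~\ref{proposition:idontknow}), builds a small strong base (Lemma~\ref{lemma:sc-strong-base}), reduces to the $\HSP$ in that group (Theorem~\ref{theorem:hpgptohsp}), and then invokes Fact~\ref{fact:friedl}(b) to solve the $\HSP$ when the characteristic is fixed (Corollary~\ref{cor:noExceptions}). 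This approach is complementary to \cite{DDW09}: it handles precisely the constant-characteristic fields, so together with Fact~\ref{fact:decker}(b) it removes the exceptional set $E_d$ altogether. If your goal is to match the paper, you should not try to prove Fact~\ref{fact:decker} at all---just cite it---and focus instead on the $\HSOP$/$\HSP$ machinery that supersedes it.
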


\section{A general reduction of the HSSP to the HSP}\label{section:red}
How much greater is the complexity of an $\HSOP$ compared to the
complexity of the related $\HSP$?  To analyze this, we first give a
simple example, which shows that the query complexity of the $\HSOP$
can be exponentially higher than the query complexity of the related
$\HSP$.  Then, more interestingly, we will establish a general
condition on the group action under which the $\HSOP$ can be reduced
in polynomial time to the related $\HSP$.

\subsection{HSSP with exponential query complexity}\label{section:exponential}
While the quantum query complexity of the $\HSP$ is polylogarithmic in
the size of the group, we show in this section that the query
complexity of an $\HSOP$ can be in the order of $|G|^{1/4}$.  More
precisely, we show that Grover's search problem can be reduced to some
specific $\HSOP$.

For a prime power $q$, the {\em general affine group} $\Affq$ of
invertible affine transformations over $\F_q$ is defined as the
semidirect product $\F_q \rtimes \F_q^*$, where $\F_q^*$ denotes the
multiplicative group of $\F_q$.  The natural action of $\Affq$ on
$\F_q$ is defined as $(b,a) \circ x = ax +b$. For every $c \in \F_q$,
the stabilizer of $c$ is the subgroup $H_c= \{ ((1-a)c,a) : a \in
\F_q^*\}$, which has two orbits: $\{c\}$ and $\{d \in \F_q : d \neq c
\}$.  Clearly, $H_c$ is a closed subgroup.  We set ${\cal H} = \{H_c :
c \in \F_q\}.$

\begin{proposition}\label{proposition:grover}
The query complexity of $\HSOP(\Affq, \F_q, \circ , {\cal H})$ is
$\Omega(q^{1/2})$.
\end{proposition}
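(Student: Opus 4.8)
The plan is to reduce Grover's unstructured search problem to this $\HSOP$, thereby inheriting the well-known $\Omega(q^{1/2})$ query lower bound for search. The key observation is the structure of the family $\mathcal{H} = \{H_c : c \in \F_q\}$: each subgroup $H_c$ is the stabilizer of the point $c$, and its orbit partition consists of exactly two classes, the singleton $\{c\}$ and its complement $\F_q \setminus \{c\}$. Consequently, an oracle $f$ that hides $H_c$ by symmetries is precisely a two-valued function that is constant on $\F_q \setminus \{c\}$ and takes a different value on the marked point $c$. Determining the hidden subgroup $H_c$ is thus equivalent to identifying the special element $c$.

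First I would make this correspondence explicit. Given any Grover instance, that is, a function $g : \F_q \to \{0,1\}$ with a unique marked element $c$ satisfying $g(c) = 1$ and $g(x) = 0$ for $x \neq c$, I observe that $g$ is exactly an oracle hiding $H_c$ by symmetries under the given action: indeed $g(x) = g(y)$ if and only if $H_c \circ x = H_c \circ y$, since the two nontrivial level sets of $g$ coincide with the two $H_c$-orbits. Therefore any algorithm solving $\HSOP(\Affq, \F_q, \circ, \mathcal{H})$ with $T$ queries outputs $H_c$, from which $c$ is immediately recovered, solving the search problem with $T$ queries.

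It then suffices to invoke the optimal lower bound for quantum search: finding a unique marked element among $q$ items requires $\Omega(\sqrt{q})$ queries to the function $g$, by the standard hybrid/polynomial argument of Bennett, Bernstein, Brassard and Vazirani. Since a query to $g$ is precisely a query to the $\HSOP$ oracle $f$, the reduction is query-preserving, and the $\Omega(q^{1/2})$ bound transfers directly to the query complexity of $\HSOP(\Affq, \F_q, \circ, \mathcal{H})$.

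I expect the only genuinely delicate point to be verifying that the correspondence is a legitimate instance of the problem, namely that each $H_c$ is closed (so that $\mathcal{H}$ is indeed a family of closed subgroups as required by the definition) and that the oracle promise is met exactly by Grover-type functions. The closedness of $H_c$ is already asserted in the setup, and the orbit computation is routine, so the bulk of the argument is the clean packaging of the reduction rather than any hard estimate; the quantitative heavy lifting is entirely delegated to the known search lower bound.
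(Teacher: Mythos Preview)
Your proposal is correct and follows essentially the same approach as the paper: both reduce Grover's search to this $\HSOP$ by observing that the characteristic function $f_c(x)=\delta_{c,x}$ of a marked element is precisely an oracle hiding $H_c$ by symmetries, and then invoke the BBBV $\Omega(\sqrt{q})$ lower bound. The only minor addition in the paper is the explicit recovery formula $c=(1-a)^{-1}b$ from a nontrivial generator $(b,a)$ of $H_c$, which you subsume under ``$c$ is immediately recovered.''
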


\begin{proof}
Grover's search over $\F_q$ can be trivially reduced to this
$\HSOP$. Indeed, if the oracle input is $f_c$, defined by $f_c(x) =
\delta_{c,x}$, where $\delta_{c,x}$ is the Kronecker delta, then $f_c$
hides $H_c$ as symmetry subgroup. From any generator $(b,a)$ of $H_c$
one recovers $c$ simply by computing $(1-a)^{-1}b$. Hence, the query
complexity of the $\HSOP$ is at least the query complexity
$\Omega(q^{1/2})$ of Grover's search~\cite{BBBV97}.  
\end{proof}

\subsection{A reduction scheme of the HSSP to the HSP}
In this section, we describe a rather natural framework for reducing
the $\HSOP$ to the related HSP. Essentially, the same idea was used in
\cite{MRRS07} for reducing certain hidden shift problems to the HSP in
the affine group over prime fields. We assume that we are given a
black-box function $f$ over $M$, which hides some subgroup $H$ of $G$
by symmetries. With the help of $f$, we would like to construct a
suitable function $f_{\rm HSP}$ over $G$, which hides $H$.  A first
approach could be to define $f_{\rm HSP}(g)=f(g\circ m)$, where $m$ is
a fixed element of $M$.  Unfortunately, this works only in very
exceptional cases because $f_{\rm HSP}$ takes constant values on the
left cosets of the stabilizer $H_m$ of $m$. Therefore, even in the
simple case when $f$ hides the trivial subgroup, the function $f_{\rm
  HSP}$ will not work unless the stabilizer of $m$ is trivial.  As a
straightforward refinement of this idea, we can pick several elements
$m_1,\ldots,m_t \in M$, and define
\begin{equation*}\label{tuples}
f_{\rm HSP}(g)=( f(g\circ m_1),\ldots, f(g\circ m_t)).
\end{equation*}
For the trivial hidden subgroup, this idea works when the common
stabilizer of $m_1,\ldots,m_t$ is trivial, that is, when
$\bigcap_{i=1}^t H_{m_i}=\{e\}$.  In the theory of permutation groups
such a system of elements is called a {\em base}~\cite{Seress}.  Of
course, bases exist only if the action of $G$ is faithful. The
following definition includes further conditions on $m_1,\ldots,m_t$
in order to make the above construction work in general.

\begin{definition}
\label{strongbase-def}
{\rm Let $G$ be a finite group and let $\circ: G \times M \to M$ be
  an action of $G$ on the finite set $M$. Let $H \leq G$ be a subgroup
  of $G$, and let ${\cal H}$ be a family of subgroups of $G$.  A set
  $B \subseteq M$ is an $H$-{\em strong base} if for every $g \in G$,
  we have
\begin{equation*}
\bigcap_{m \in B} H G_{g \circ m} = H.
\end{equation*}
We call $B$ an ${\cal H}$-{\rm strong base} when it is $H$-strong for
every subgroup $H \in {\cal H}$.  }
\end{definition}
Observe that $\bigcap_{m\in M}HG_m=H^{**}$. Hence, $M$ itself is
always a ${\cal C}(G)$-strong base.  If $B$ is an $H$-strong base,
then $B$ is also an $(x^{-1}Hx)$-strong base for every $x \in G$.
Therefore, if ${\cal H}$ consists of conjugated subgroups, then $B$ is
an ${\cal H}$-strong base if it is an $H$-strong base for some $H \in
{\cal H}$. Also, if $\cal H$ is closed under conjugation by elements
of $G$, $B$ is an ${\cal H}$-{strong base} if and only if $\bigcap_{m \in B} H
G_{m} = H$ for every $H\in {\cal H}$.

The following lemma states that the
$\HSOP$ is indeed reducible to the HSP via an ${\cal H}$-strong base.
\begin{lemma}[Reduction of HSSP to HSP]\label{lemma:reduction}
Let $G$ be a finite group, and let $\circ$ be  an action of $G$ on $M$. Suppose that
the function $f : G \rightarrow  S$ hides some $H \in {\cal H}$ by symmetries. Let $B = \{m_1, \ldots, m_t\}  $ be an
${\cal H}$-strong base. Then $H$ is hidden by the function 
$
f_{\rm HSP}(g)= (f (g \circ m_1),\ldots, f(g \circ m_t))
$.
\end{lemma}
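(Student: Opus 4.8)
The plan is to verify directly that $f_{\rm HSP}$ fulfils the defining property of an $\HSP$ oracle for $H$: namely that $f_{\rm HSP}(g_1)=f_{\rm HSP}(g_2)$ holds precisely when $Hg_1=Hg_2$. First I would unfold the two sides. By construction $f_{\rm HSP}(g_1)=f_{\rm HSP}(g_2)$ means $f(g_1\circ m_i)=f(g_2\circ m_i)$ for every $i$, and since $f$ hides $H$ by symmetries this is equivalent to the equality of $H$-orbits $H\circ(g_1\circ m_i)=H\circ(g_2\circ m_i)$ for every $i$. Invoking the action axiom $h\circ(g\circ m)=(hg)\circ m$, each orbit rewrites as $H\circ(g\circ m_i)=(Hg)\circ m_i$. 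Thus the whole problem reduces to showing that the system of orbit equalities $(Hg_1)\circ m_i=(Hg_2)\circ m_i$, $i=1,\dots,t$, holds if and only if $Hg_1=Hg_2$.

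The forward implication is immediate, since $Hg_1=Hg_2$ at once yields $(Hg_1)\circ m_i=(Hg_2)\circ m_i$ for all $i$; this direction does not use the base at all. For the converse I would translate each orbit equality into a coset membership. Because distinct $H$-orbits are disjoint, $(Hg_1)\circ m_i=(Hg_2)\circ m_i$ forces $g_2\circ m_i\in(Hg_1)\circ m_i$, i.e.\ $g_2\circ m_i=(hg_1)\circ m_i$ for some $h\in H$. Rearranging with the action axiom gives $(hg_1)^{-1}g_2\in G_{m_i}$, hence $g_2\in hg_1G_{m_i}\subseteq Hg_1G_{m_i}$. Consequently $g_2\in\bigcap_{i}Hg_1G_{m_i}$.

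The crux is to recognise this intersection as exactly the strong base condition. Using the standard identity $G_{g_1\circ m_i}=g_1G_{m_i}g_1^{-1}$ for the stabiliser of a moved point, I would rewrite $Hg_1G_{m_i}=HG_{g_1\circ m_i}g_1$, so that, since right translation by $g_1$ is a bijection and hence commutes with intersection,
\[
\bigcap_{i}Hg_1G_{m_i}=\Bigl(\bigcap_{m\in B}HG_{g_1\circ m}\Bigr)g_1 .
\]
Now I apply the $\mathcal H$-strong base property of Definition~\ref{strongbase-def} to $H$ (legitimate since $H\in\mathcal H$) with $g=g_1$, which states precisely that $\bigcap_{m\in B}HG_{g_1\circ m}=H$. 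Therefore the intersection equals $Hg_1$, and $g_2\in Hg_1$ yields $Hg_1=Hg_2$, completing the converse.

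I expect the only real difficulty to be the bookkeeping in the converse: correctly passing from equality of orbits to the membership $g_2\in Hg_1G_{m_i}$, and then spotting that conjugating the stabiliser turns $\bigcap_i Hg_1G_{m_i}$ into a right translate of $\bigcap_{m\in B}HG_{g_1\circ m}$, which is exactly the quantity the strong base definition controls. Every other step is a routine unfolding of the definitions of $f_{\rm HSP}$, of hiding by symmetries, and of the group action.
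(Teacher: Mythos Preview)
Your proof is correct and follows essentially the same route as the paper's: both directions are handled identically, translating orbit coincidences into the membership $g_2\in\bigcap_i Hg_1G_{m_i}$, conjugating stabilisers to rewrite this as $(\bigcap_{m\in B}HG_{g_1\circ m})g_1$, and then invoking the strong base condition. The only differences are cosmetic (your $g_1,g_2$ versus the paper's $x,y$, and your slightly more explicit rewriting of $H\circ(g\circ m)$ as $(Hg)\circ m$).
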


\begin{proof}
We will show that for every $x, y \in G$, we have 
$f_{\rm HSP}(x) = f_{\rm HSP}(y)$ if and only if $y \in Hx$.
To see the "only if" part, suppose that $f_{\rm
  HSP}(x) = f_{\rm HSP}(y)$.  Then by definition $f(x \circ m) = f(y
\circ m)$, for every $m \in B$. Therefore, for every $m\in B$
there exists an element $h_m \in H$ such that $x \circ m = h_m \circ (y \circ m)$. 
This equality implies that $m = (x^{-1} h_m y) \circ m$, that is $x^{-1} h_m y \in
G_m$.  Thus $y \in h_m^{-1} x G_m$, for every $m \in B$, from which we
can deduce $y \in \bigcap_{m \in B} H x G_{m}.$ Now observe that $xG_m
x^{-1} = G_{x \circ m}$, and therefore $y \in \bigcap_{m \in B} H G_{x
  \circ m}x.$ {}From this we can conclude $y \in Hx$ because $B$ is an
${\cal H}$-strong base. To show the reverse implication, suppose that 
$y=h x$ for some $h \in H$. This implies
$y \circ m  = h \circ( x \circ m)$,
for all $m \in B$.  Since $f$ hides $H$ as symmetry subgroup, we have
$f(y \circ m)  = f( x \circ m)$,
again for all $m \in B$, 
implying $f_{\rm HSP}(y)=f_{\rm HSP}(x)$ by the definition of $f_{\rm HSP}$.
\end{proof}

The following statement is immediate from Lemma~\ref{lemma:reduction}.
\begin{proposition}\label{prop:reduction}
Let $G$ be a finite group, $M$ a finite set, $\circ$ a polynomial time computable 
action of $G$ on $M$, and ${\cal H}$ a family of subgroups of $G$.
If there exists an efficiently computable ${\cal H}$-strong base 
in $M$, then $\HSOP(G,M, \circ, {\cal H})$ is polynomial time reducible to $\HSP(G, {\cal H})$.
\end{proposition}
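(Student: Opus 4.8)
The plan is to derive the statement directly from the already-established Lemma~\ref{lemma:reduction} by supplementing it with an efficiency analysis of the resulting transformation. Given an oracle $f : M \to S$ for $\HSOP(G,M,\circ,{\cal H})$ that hides some $H \in {\cal H}$ by symmetries, together with an efficiently computable ${\cal H}$-strong base $B = \{m_1,\ldots,m_t\}$, I would define the function $f_{\rm HSP}(g) = (f(g\circ m_1),\ldots,f(g\circ m_t))$ exactly as in the lemma. By Lemma~\ref{lemma:reduction}, $f_{\rm HSP}$ hides $H$ as an instance of $\HSP(G,{\cal H})$, that is, $f_{\rm HSP}(x)=f_{\rm HSP}(y)$ if and only if $Hx = Hy$. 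Hence running any algorithm for $\HSP(G,{\cal H})$ on the oracle $f_{\rm HSP}$ returns $H$, which is precisely the output required by the $\HSOP$. This settles correctness of the reduction.

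It then remains to check that the reduction is efficient, i.e.\ that each evaluation of $f_{\rm HSP}$ incurs only polynomial overhead. First I would compute the base $B$ once, using the assumed efficient base-construction procedure; since that procedure runs in polynomial time, its output size $t = |B|$ is automatically bounded polynomially in $\log|G|$ and the remaining problem parameters. Each evaluation of $f_{\rm HSP}$ at a group element $g$ then amounts to computing the $t$ points $g\circ m_1,\ldots,g\circ m_t$, which is efficient because $\circ$ is polynomial time computable by hypothesis, followed by $t$ queries to the $\HSOP$ oracle $f$. Thus a single query to $f_{\rm HSP}$ is simulated by $t = {\rm poly}$ queries to $f$ plus polynomial additional work, so the whole reduction respects the efficiency requirement.

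The only real point requiring care — rather than a genuine obstacle — is keeping the parameter $t$ under control, and this is exactly why the hypothesis asks for an \emph{efficiently computable} strong base rather than one that merely exists. Without a polynomial bound on $|B|$, the tuple defining $f_{\rm HSP}$ could have exponential length and each simulated query would become prohibitively costly. Once $t$ is bounded and $\circ$ is computable, the argument of Lemma~\ref{lemma:reduction} transfers verbatim and the reduction is efficient, which is why the proposition follows immediately.
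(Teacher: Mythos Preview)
Your proposal is correct and matches the paper's approach: the paper simply states that the proposition is immediate from Lemma~\ref{lemma:reduction}, and your argument spells out exactly that derivation together with the routine efficiency check. Your additional observation that efficient computability of the base bounds $|B|$ polynomially is precisely the point the paper leaves implicit.
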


\section{The HSSP for Frobenius complements and the HQPP}
In view of Proposition~\ref{prop:reduction}, we are interested in group
actions for which there exist easily computable (and therefore also
small) bases for some interesting families of subgroups. If in addition
the related $\HSP$ is easy to solve then we have efficiently solvable
$\HSOP$s. It turns out that Frobenius groups under
some conditions not only have these properties, but also that the
$\HQPP$ can be cast as one of these HSSPs.

\subsection{Strong bases in Frobenius groups}
A {\em Frobenius group} is a transitive permutation group 
acting on a finite set such that only the identity element has more than
one fixed point and some non-trivial element fixes a point (see for 
example~\cite{Huppert}).
Let us recall here some notions and facts about these groups.
Let $G$ be a Frobenius group with action $\circ_M$ on $M$. 
The identity element together with the elements of $G$ that have no fixed
points form a normal subgroup $K$, the {\em Frobenius kernel}, for which we also have $|K|=|M|$.
A subgroup $H$ of $G$ is a  {\em Frobenius complement} if it is the stabilizer $H_m$
of some
element $m\in M$. It is a subgroup complementary to
$K$, that is $K\cap H=\{1\}$ and $G=KH$. Hence, the
group $G$ is a semidirect product $K \rtimes H$ 
of $K$ and $H$. 
We define the binary operation $\circ_K : G \times K \rightarrow K$ by
\centerline{
$g\circ_K x=yhxh^{-1},$
}
when $x \in K$ and $g=yh$ with 
$y \in K$ and $h \in H$.
It is a straightforward computation to check that $\circ_K$ is an action of $G$ on $K$.
Furthermore, we
can identify the action $\circ_M$  with the action $\circ_K$ via
the map $\phi: M\to K$ defined as follows.
For any $n \in M$, there exists $g_n \in G$ such that 
$g_n \circ_M m=n$ since $G$ is transitive. If $g_n=y_nh_n$ with 
$y_n \in K$ and $h_n \in H$, by definition we set $\phi (n) = y_n$.
Then indeed
for every $g \in G $ and $n \in M$, we have $g \circ_K \phi(n) = \phi(g \circ_M n)$.
From now on we will suppose without loss of generality that the action is $\circ_K$ which we denote for simplicity by $\circ$.

Observe then that with respect to $\circ$, the Frobenius complement $H$ is the
stabilizer of ${e}$, the identity element of $K$.  
The orbits of $H$ are $\{e\}$ and some other subsets
of $K$, each consisting of $|H|$ elements.
The other  Frobenius complements are $H_x=xHx^{-1}$, for  $x \in K$. 
They are
closed subgroups and their
orbits form closed partitions. 
We denote by ${\cal FC}$ the set of Frobenius complements in $G$.

Let $H$ be the Frobenius complement $H_e$.
Since the Frobenius complements are all conjugates of $H$, 
being an ${\cal FC}$-strong base is equivalent to being
an $H$-strong base. 

To characterize $H$-strong bases it will be convenient 
to use the following notion. For $u,v \in K$ with $u\neq v$, we say 
that $z\in K$ {\em separates} $u$ and
$v$ if $v\circ z\not\in H\circ(u\circ z)$. We have the following
characterization.

\begin{lemma}
\label{Frobenius-base-lemma}
Let $B \subseteq K$. Then $B$ is an $H$-strong base
if and only if for all $u\neq v$ in $K$ 
there exists $z \in B$ which separates $u$ and $v$. 
\end{lemma}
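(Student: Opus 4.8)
I want to unpack the definition of an $H$-strong base and show it is equivalent to the stated separation condition. Recall that, since $\mathcal{FC}$ is closed under conjugation, being an $H$-strong base amounts to the requirement $\bigcap_{z \in B} H G_{z} = H$, where $G_z$ is the stabilizer of $z \in K$ under $\circ$. The key fact about Frobenius groups I will exploit is that the stabilizer of a point $z$ is the Frobenius complement $H_z = z H z^{-1}$, so $G_z = H_z$, and in particular $G_e = H$. So I must show that the intersection $\bigcap_{z \in B} H H_{z}$ collapses to $H$ exactly when every pair of distinct elements of $K$ is separated by some point of $B$.

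\textbf{The plan.} The strategy is to unwind the definition of an $H$-strong base and translate it into the language of the separation relation. Since $\mathcal{FC}$ is closed under conjugation, by the remark following Definition~\ref{strongbase-def} the set $B$ is an $\mathcal{FC}$-strong base (equivalently $H$-strong base) if and only if $\bigcap_{z \in B} H G_z = H$, where $G_z$ denotes the stabilizer of $z$ under the action $\circ$. The crucial structural fact about the Frobenius action I will use is that for each $z \in K$ the stabilizer $G_z$ equals the conjugate $H_z = x H x^{-1}$ where $\phi^{-1}(z)$ corresponds to $x$; in particular $G_e = H$. The inclusion $H \subseteq \bigcap_{z \in B} H G_z$ is automatic (each $G_z \ni e$ gives $H \subseteq H G_z$), so the content is entirely in the reverse inclusion, and the plan is to show that $\bigcap_{z \in B} H G_z = H$ fails precisely when some pair $u \neq v$ is separated by no element of $B$.

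\textbf{Key steps.} First I would establish the bridge between the separation relation and cosets of the form $H G_z$. I claim that for $u \neq v$ in $K$ and $z \in B$, the element $z$ \emph{fails} to separate $u$ and $v$ exactly when there is some $h \in H$ with $v \circ z = h \circ (u \circ z)$. Rewriting, this says $(v u^{-1}$-type element$)$ lies in a stabilizer coset: concretely, if I pick $g_u \in G$ with $g_u \circ e = u$ and likewise $g_v \circ e = v$, then $u \circ z$ and $v \circ z$ are governed by how $g_u, g_v$ move the base point, and non-separation translates into the condition $g_v^{-1} h g_u \in G_z$ for some $h \in H$, i.e.\ $g_v^{-1} g_u \in H G_z$. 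Second, I would run this over all $z \in B$: the element $g = g_v^{-1} g_u$ lies in $\bigcap_{z \in B} H G_z$ if and only if no $z \in B$ separates $u$ and $v$. Third, I must check that such a $g$ lies outside $H$ exactly when $u \neq v$ — this is where faithfulness and the Frobenius structure ($H$ being a point stabilizer with the stated orbit structure, so $g \in H \iff g \circ e = e \iff u = v$) come in. Combining these, $\bigcap_{z \in B} H G_z \neq H$ iff there exists such a $g \notin H$ iff there is an unseparated pair $u \neq v$, which is the contrapositive of the lemma.

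\textbf{Main obstacle.} I expect the delicate point to be the careful bookkeeping in the first step: passing between the abstract intersection $\bigcap_{z} H G_z$ and the pointwise separation condition requires choosing coset representatives $g_u, g_v$ and verifying that the non-separation of $u, v$ by $z$ is genuinely equivalent to $g_v^{-1} g_u \in H G_z$, independent of the choices made. The Frobenius-specific identity $G_z = \{g : g \circ z = z\}$ together with the relation $G_{g \circ m} = g G_m g^{-1}$ (already used in the proof of Lemma~\ref{lemma:reduction}) should make this manageable, but ensuring every element of $\bigcap_z H G_z \setminus H$ arises from a genuine pair $u \neq v$ — rather than collapsing — is the step most likely to hide a subtlety, and it is precisely where the defining property of Frobenius groups (stabilizers intersect trivially, giving the clean semidirect decomposition $G = K \rtimes H$) must be invoked.
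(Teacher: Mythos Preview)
Your plan contains a genuine error, and it is exactly at the point you flagged as delicate.

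First, the reduction you invoke from the remark after Definition~\ref{strongbase-def} does not say what you claim. The remark gives that $B$ is $\mathcal{FC}$-strong iff $\bigcap_{z\in B}H'G_z=H'$ for \emph{every} $H'\in\mathcal{FC}$, not just for the single complement $H=H_e$. The single-$H$ condition $\bigcap_{z\in B}HG_z=H$ is strictly weaker than being $H$-strong: take $B=\{e\}$. Then $G_e=H$, so $\bigcap_{z\in B}HG_z=HH=H$, yet $B=\{e\}$ is not $H$-strong (for $y\in K\setminus\{e\}$ one has $HG_{y}=HH_y\supsetneq H$), and the separation condition also fails (any two distinct $u,v$ in the same $H$-orbit are not separated by $e$). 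So the equivalence you set out to prove, namely $\bigcap_{z\in B}HG_z=H\iff$ separation, is simply false.

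Second, and this is where the algebra actually breaks, your ``i.e.'' in Step~1 is invalid. From non-separation $v\circ z=h\circ(u\circ z)$ you correctly get $v^{-1}hu\in G_z$ for some $h\in H$. But this is \emph{not} the same as $v^{-1}u\in HG_z$: the $h$ is sandwiched between $v^{-1}$ and $u$, and moving it to the left replaces $H$ by the conjugate $v^{-1}Hv\neq H$. The honest consequence of $v^{-1}hu\in G_z$ is $vu^{-1}\in HG_{u\circ z}$ (conjugate by $u$), with the base point shifted by $u$. This shift is precisely why the full quantifier ``for every $g$'' in the definition of $H$-strong base cannot be dropped: in the contrapositive direction one needs $\bigcap_{z}HG_{u\circ z}=H$, not $\bigcap_z HG_z=H$. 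The paper's proof works with the unreduced definition throughout; for the ``only if'' direction it lands on $vu^{-1}\in\bigcap_z HG_{u\circ z}$ and invokes the strong-base condition at $g=u$, while for the ``if'' direction it starts from arbitrary $g=yh,\ g'=y'h'\in\bigcap_z HG_{g\circ z}$ and manufactures the pair $u=h^{-1}yh$, $v=h^{-1}h'^{-1}y'h'yh$ by explicit semidirect-product manipulation. That construction of $u,v$ from $g,g'$ is the missing idea in your outline.
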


\begin{proof}
To see the "if" part of the statement, suppose that $g' \in \bigcap_{z
  \in B} H G_{g \circ z}$ for some $g',g \in G$. We will prove that
$g' \in H$.  Let $g=yh$ and $g'=y'h'$, where $y,y'\in K$ and $h,h'\in
H$.  Then for every $z \in B$, there exists $h_z \in H$ such that
$g'\circ ( g \circ z) = h_z \circ (g \circ z)$.  Using the definition
of $\circ$, this equality can be rewritten as $y'h' yhz
h^{-1}h'^{-1}=h_z y hzh^{-1}h_z^{-1}$, which is equivalent to
$h^{-1}h'^{-1}y'h' yhz = h^{-1}h'^{-1}h_z hh^{-1}y
hzh^{-1}h_z^{-1}h'h$.  Using again the definition of $\circ$, this is
$h^{-1}h'^{-1}y'h' yh \circ z = h^{-1}h'^{-1}h_zh \circ (h^{-1}yh
\circ z)$.  Set $u=h^{-1}yh$ and $v = h^{-1}h'^{-1}y'h' yh$.  Then for
every $z \in B$, we have $v \circ z \in H \circ (u \circ z)$, that is,
no element in $B$ separates $u$ and $v$.  Therefore, by the assumption
we get $u = v$, which is equivalent to $y' = e$.  Thus $g' =h'$ is
indeed an element of $H$.

To see the reverse implication, assume that there exist $u,v \in K,u\neq v,$
such that none of the elements $z\in B$ separate $u$ and $v$. This
means that for every $z\in B$ there exists an element $h_z\in H$ such
that $v\circ z=h_z\circ (u\circ z)$. Using $v\circ z=(vu^{-1})\circ
(u\circ z)$, this equality implies $vu^{-1}(u\circ z)=h_z\circ (u\circ
z)$, whence $h_z^{-1}vu^{-1}(u\circ z)=u\circ z$, that is,
$h_z^{-1}vu^{-1}\in G_{u\circ z}$.  This gives $vu^{-1}\in
h_zG_{u\circ z}\subseteq H G_{u\circ z}$ for every $z\in B$, that
is, $vu^{-1}\in \bigcap_{z\in B}H G_{u\circ z}$.  As
$vu^{-1}\not\in H$, this contradicts  the definition of a strong
base.
\end{proof}

Our next lemma 
gives a lower bound on the number of elements in $K$ that separate 
$u$ and $v$.

\begin{lemma}
\label{lemma:lower-bound}
Let $|H| \not=|K|-1$. Then for any two distinct elements $u$ and $v$ of $K$
we have \\ \centerline{$ |\{ z \in K : z {\rm ~separates~} u {\rm
    ~and~} v \}| > |K|/2.$} 
\end{lemma}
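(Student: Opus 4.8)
The plan is to count directly the elements $z \in K$ that fail to separate $u$ and $v$ and to show there are fewer than $|K|/2$ of them. First I would unwind the definition in concrete terms. For $w \in K$ we have $u \circ z = uz$ and $v \circ z = vz$ (these are products in $K$, since $u,v$ lie in the kernel), while $H \circ (uz) = \{h(uz)h^{-1} : h \in H\}$ is the $H$-conjugacy orbit of $uz$. Hence $z$ fails to separate $u$ and $v$ precisely when $vz = h(uz)h^{-1}$ for some $h \in H$. Writing $\alpha_h(x) = hxh^{-1}$ for the conjugation automorphism, this reads $vz = \alpha_h(u)\alpha_h(z)$, and after left-multiplying by $\alpha_h(u)^{-1}$ and right-multiplying by $z^{-1}$ it becomes the clean identity
\[
\alpha_h(z)\, z^{-1} = \alpha_h(u)^{-1}\, v .
\]

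Next I would count the solutions for each fixed $h$. When $h = e$ the equation forces $v = u$, which is excluded, so $h = e$ contributes no bad $z$. For $h \neq e$ the right-hand side is a fixed element of $K$, so it suffices to show that the map $\psi_h : z \mapsto \alpha_h(z)\,z^{-1}$ is a bijection of $K$; then each $h \neq e$ yields exactly one bad $z$. Injectivity of $\psi_h$ follows from the fixed-point-free property of the Frobenius complement: $\psi_h(z_1) = \psi_h(z_2)$ rearranges to $\alpha_h(z_2^{-1}z_1) = z_2^{-1}z_1$, and since for $h \neq e$ the automorphism $\alpha_h$ fixes only $e$ — which is exactly the statement that every non-trivial $H$-orbit in $K$ has size $|H|$ — we conclude $z_1 = z_2$. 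As $K$ is finite, $\psi_h$ is then bijective.

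Combining these observations, the number of $z$ that fail to separate $u$ and $v$ is at most $|H| - 1$, one candidate per non-identity $h$ by the union bound. The final ingredient is the divisibility constraint: because $K \setminus \{e\}$ is partitioned into $H$-orbits of size $|H|$, we have that $|H|$ divides $|K| - 1$, so the hypothesis $|H| \neq |K| - 1$ forces $|H| \leq (|K|-1)/2$. Therefore the number of separating elements is at least
\[
|K| - (|H| - 1) \;\geq\; |K| - \tfrac{|K|-1}{2} + 1 \;=\; \tfrac{|K|+3}{2} \;>\; \tfrac{|K|}{2},
\]
which is the claimed bound.

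I expect the only genuinely delicate point to be the bijectivity of $\psi_h$, i.e.\ correctly invoking the fixed-point-free action to pass from $\alpha_h(z_2^{-1}z_1) = z_2^{-1}z_1$ to $z_1 = z_2$; the rest is bookkeeping, and the crucial factor of two comes cheaply from the divisibility of $|K|-1$ by $|H|$ together with the exclusion of the extreme case $|H| = |K| - 1$.
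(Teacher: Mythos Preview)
Your proof is correct and follows essentially the same route as the paper: rewrite the non-separation condition as $\alpha_h(z)z^{-1}=\alpha_h(u)^{-1}v$, use the fixed-point-free action to show that for each $h\neq e$ the map $z\mapsto \alpha_h(z)z^{-1}$ is injective (hence at most one bad $z$ per $h$), and finish with the divisibility argument $|H|\mid |K|-1$ to get $|H|\le(|K|-1)/2$. The only cosmetic difference is that you phrase the injectivity as bijectivity of $\psi_h$, whereas the paper just shows each $h$ ``belongs to'' at most one $z$; the counting and the final inequality are identical.
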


\begin{proof}

If $z$ does not separate $u$ and $v$ then there exists an element
$h\in H$ such that $vz=huzh^{-1}$ which can also be written as
$hu^{-1}h^{-1}v = hzh^{-1}z^{-1}$.  We say that such an element $h$
{\em belongs} to $z$. The identity element $h=e$ does not belong to
any element $z\in K$ since $u \neq v$. We claim that $h\neq e$ cannot
belong to two distinct elements of $K$. Indeed, if
$hzh^{-1}z^{-1}=hz'h^{-1}z'^{-1}$ then $hz'^{-1}zh^{-1}=z'^{-1}z$,
which in turn implies that $z'^{-1}z=e$ as $e$ is the only element of
$K$ stabilized by the elements of $H$. Therefore, there are at most
$|H|-1$ elements in $K$ which do not separate $u$ and $v$. In other
words, at least $|K|-|H|+1$ of the elements of $K$ separate $u$ and
$v$. Note that $H$ has $(|K|-1)/|H|$ orbits of length $|H|$ on the
nontrivial elements of $K$, and thus $|H|$ divides but is not equal to
$|K|-1$, which implies $|H| \leq (|K| -1)/2$. From this we can indeed
conclude, since then $|K|-|H|+1 > |K|/2$.
\end{proof}

We have the following result regarding  the
existence of {small} strong bases for ${\cal FC}$. 

\begin{proposition}
\label{proposition:frobenius-base}
Let $G$ be a Frobenius group with kernel $K$ such that the cardinality
of the Frobenius complements is different from $|K| - 1$.  Let $B
\subseteq K$ be a uniformly random set of size $\ell$, where $\ell =
\Theta(\log |K| \log 1/ \epsilon)$.  Then $B$ is an ${\cal FC}$-strong
base with probability of at least $1-\epsilon$.
\end{proposition}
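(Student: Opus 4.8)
The plan is to use the probabilistic method, combining the combinatorial characterization of strong bases from Lemma~\ref{Frobenius-base-lemma} with the counting bound from Lemma~\ref{lemma:lower-bound}. By Lemma~\ref{Frobenius-base-lemma}, a set $B \subseteq K$ fails to be an $H$-strong base exactly when there exists a pair $u \neq v$ in $K$ such that \emph{no} element of $B$ separates $u$ and $v$. So the natural strategy is a union bound over all such "bad" pairs: I will bound the probability that $B$ fails to separate a fixed pair, and then multiply by the number of pairs, which is at most $|K|^2$.

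First I would fix an arbitrary pair $u \neq v$ in $K$ and estimate the probability that a uniformly random $B$ of size $\ell$ fails to separate them. Since the hypothesis of Proposition~\ref{proposition:frobenius-base} guarantees $|H| \neq |K| - 1$, Lemma~\ref{lemma:lower-bound} applies and tells us that strictly more than $|K|/2$ of the elements $z \in K$ separate $u$ and $v$. Hence, for a single uniformly random element $z \in K$, the probability that $z$ fails to separate $u$ and $v$ is less than $1/2$. The main subtlety here is that $B$ is a random set of size $\ell$, so its elements are not independent; however, since the fraction of non-separating elements is below $1/2$, the probability that all $\ell$ chosen elements fail to separate a fixed pair is at most $(1/2)^\ell$ — this holds either by modeling $B$ as $\ell$ independent uniform samples (which only helps, since repetitions can only shrink $B$ and thus can only increase the failure probability) or by a direct argument that sampling without replacement from a set where fewer than half the elements are "bad" gives a per-step bad probability still below $1/2$ conditioned on any history.

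Next I would apply the union bound. The number of ordered pairs $(u,v)$ with $u \neq v$ is $|K|(|K|-1) < |K|^2$, so the probability that $B$ fails to be an $H$-strong base is at most $|K|^2 \cdot (1/2)^\ell = 2^{2\log_2|K| - \ell}$. To make this at most $\epsilon$, it suffices to take $\ell$ larger than $2\log_2 |K| + \log_2(1/\epsilon)$, which is exactly $\ell = \Theta(\log|K| \log(1/\epsilon))$ as claimed (in fact $\Theta(\log|K| + \log(1/\epsilon))$ suffices, and certainly the stated bound does). Finally, since the Frobenius complements are all conjugate, being an $H$-strong base is equivalent to being an ${\cal FC}$-strong base as noted before Lemma~\ref{Frobenius-base-lemma}, so this same $B$ is an ${\cal FC}$-strong base with probability at least $1 - \epsilon$.

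The step I expect to require the most care is the dependence issue in the sampling model: the statement says $B$ is a uniformly random \emph{set} of size $\ell$, so I must be precise about why the clean $(1/2)^\ell$ bound survives sampling without replacement. The cleanest route is to observe that removing separating elements from the pool only increases the relative density of non-separating elements, and then check that even after conditioning on previously drawn elements the conditional probability of drawing a non-separating element stays below $1/2$ as long as not too many have been drawn — or, more simply, to sample $\ell$ elements independently with replacement and note that the resulting multiset, viewed as a set, is contained in an independent-with-replacement model whose failure event dominates the without-replacement one. Everything else is a routine union bound, so this modeling point is where I would be most explicit.
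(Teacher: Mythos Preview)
Your proposal is correct and follows essentially the same route as the paper: invoke Lemma~\ref{Frobenius-base-lemma} to reduce to separating all pairs, use Lemma~\ref{lemma:lower-bound} to bound the single-element failure probability by $1/2$, multiply to get $2^{-\ell}$, and finish with a union bound over the $O(|K|^2)$ pairs. The paper simply asserts the $2^{-\ell}$ bound without addressing the with/without-replacement distinction you flag; your conditional argument (each successive draw from the remaining pool still has bad-fraction below $1/2$ since $b<|K|/2$ implies $(b-k)/(|K|-k)<1/2$ for all $k\ge 0$) is the clean way to handle it.
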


\begin{proof}
Let $B$ be a uniformly random subset of $K$ of size $\ell$.  By
Lemma~\ref{Frobenius-base-lemma} it is sufficient to prove that with a
probability of at least $1-\epsilon$, for every $u \neq v$, there exists
an element in $B$ which separates $u$ and $v$.  We will in fact upper
bound the probability of the opposite event.  For a fixed pair $u \neq
v$, by Lemma~\ref{lemma:lower-bound}, the probability that a random $z$
does not separate $u$ and $v$ is at most 1/2. Therefore, the
probability that none of the elements in $B$ separates $u$ and $v$ is
less than $2^{-\ell}$. Thus, the probability that for some pair $u \neq
v$ none of the elements in $B$ separates $u$ and $v$ is less than
$\binom{|K|}{2}2^{-\ell}$, which is at most $\epsilon$ by the choice
of $\ell$.  
\end{proof}

If $G$ is a Frobenius group that satisfies the condition of
Proposition~\ref{proposition:frobenius-base} then we can compute
efficiently a small base for the Frobenius complements, because there
are efficient algorithms for random sampling nearly uniformly in
black-box groups~\cite{babai}.  Therefore, by
Proposition~\ref{prop:reduction} we can efficiently reduce the $\HSOP$
to the related HSP and we obtain the following result.

\begin{theorem}\label{theorem:frobenius-reduction}
Let $G = K \rtimes H $ be a Frobenius group with action $\circ$ 
such that $ |H| < |K|-1$.  Then $\HSOP(G, K, \circ, {\cal FC})$
is reducible in probabilistic polynomial time to $\HSP(G, {\cal FC})$. 
\end{theorem}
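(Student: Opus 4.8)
The plan is to assemble the machinery developed above: Proposition~\ref{proposition:frobenius-base} guarantees the existence of a small ${\cal FC}$-strong base, which by Proposition~\ref{prop:reduction} is exactly what is needed to reduce the $\HSOP$ to the related $\HSP$, and the near-uniform black-box sampling of~\cite{babai} lets us produce such a base efficiently. So the theorem is essentially a matter of checking that the hypotheses of these earlier results are met and that the combined procedure runs in probabilistic polynomial time.

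First I would verify the preconditions. The action $\circ = \circ_K$ sends $g = yh$ (with $y \in K$, $h \in H$) and $x \in K$ to $yhxh^{-1}$; since the decomposition of any element of $G = K \rtimes H$ into its $K$- and $H$-components is explicit, $\circ$ is polynomial-time computable, as required by Proposition~\ref{prop:reduction}. Next, because a nonidentity element of the complement $H$ fixes only $e$, the group $H$ acts fixed-point-freely on the $|K|-1$ nontrivial elements of $K$, so all its orbits there have size $|H|$; hence $|H|$ divides $|K|-1$ and $|H| \le |K|-1$ always. Consequently the theorem's hypothesis $|H| < |K|-1$ is equivalent to $|H| \ne |K|-1$, which is precisely the condition required by Lemma~\ref{lemma:lower-bound} and Proposition~\ref{proposition:frobenius-base}.

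The reduction itself then runs as follows. Fixing a target failure probability $\epsilon$ (inverse-polynomial in $\log|G|$, say), I would draw a set $B \subseteq K$ of size $\ell = \Theta(\log|K|\,\log 1/\epsilon)$ using the sampler of~\cite{babai}. By Proposition~\ref{proposition:frobenius-base}, $B$ is an ${\cal FC}$-strong base with probability at least $1-\epsilon$. Feeding $B$ into Proposition~\ref{prop:reduction} converts any oracle $f$ hiding some $H \in {\cal FC}$ by symmetries into $f_{\rm HSP}(g) = (f(g\circ m_1),\dots,f(g\circ m_\ell))$, which hides the same $H$ for $\HSP(G,{\cal FC})$; since $\circ$ is efficiently computable and $\ell$ is polylogarithmic, each evaluation of $f_{\rm HSP}$ costs only polynomially many queries and group operations, giving a polynomial-time reduction.

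The one point requiring care — and the main obstacle — is that the sampler in~\cite{babai} produces elements only nearly uniformly, whereas Proposition~\ref{proposition:frobenius-base} is stated for the exactly uniform distribution. I would handle this by driving the total-variation distance of each sample from uniform below $\epsilon/\ell$ at polylogarithmic cost, so that a union bound over the $\ell$ draws perturbs the separation probability of Lemma~\ref{lemma:lower-bound} by at most $\epsilon$ and leaves the conclusion of Proposition~\ref{proposition:frobenius-base} intact up to an additive $O(\epsilon)$ loss. Since we cannot in general certify that a given $B$ is a strong base, the statement is necessarily a \emph{probabilistic} reduction; the accumulated failure probability stays polynomially small, which is all that is required.
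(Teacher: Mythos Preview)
Your proposal is correct and follows essentially the same approach as the paper: the paper's argument (given in the paragraph immediately preceding the theorem) is simply to invoke Proposition~\ref{proposition:frobenius-base} together with the near-uniform sampler of~\cite{babai} to produce a small ${\cal FC}$-strong base, and then apply Proposition~\ref{prop:reduction}. Your write-up is more careful in verifying the hypotheses and in handling the gap between near-uniform and exactly uniform sampling, but the underlying strategy is identical.
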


We remark that the reduction of Grover's search to a specific HSSP in
Proposition~\ref{proposition:grover} can be extended to arbitrary
Frobenius groups when $|H|=|K|-1$, that is, sharply 2-transitive
groups.  Therefore, for such groups it not only follows that small
$H$-bases fail to exist but it also follows that even the quantum
query complexity of the \HSOP{} is $\Omega(|G|^{1/4})$.  Also, the
only strong base in a sharply 2-transitive group is the whole $K$.

\subsection{Affine groups}
\label{section:affine}
As any affine group, the general affine group $\Affq = \F_q \rtimes
\F_q^*$ defined in Section~\ref{section:exponential} is a Frobenius
group.  Its kernel is $\F_q$. In the terminology of Frobenius groups,
we have proved in Proposition~\ref{proposition:grover} that for $\Affq$ the 
$\HSOP$ for the complements is difficult. Let
$H$ be a proper subgroup of $\F_q^*$ which is not the trivial
group. We define the group $\Affq(H)$ as $\F_q \rtimes H$. With the
restriction of the natural action, denoted here by $\circ$, $\Affq(H)$
is also a Frobenius group.  In contrast to the difficulty in the full
affine group, we obtain the following positive results for the smaller Frobenius
groups. They are consequences of the analogous results
for the related $\HSP$ stated in Facts~\ref{fact:ettinger}
and~\ref{fact:friedl}, via the reduction of
Theorem~\ref{theorem:frobenius-reduction}. 
Statements
(a) and (b) are not new, they are proved in a slightly different formulation
in \cite{MRRS07}, using implicitly the randomized construction
for a strong base. 
For (c) note that the
derived subgroup in $\Affq(H)$ is indeed commutative.

\begin{corollary}\label{corollary:affine}
Let $q$ be a prime power and let $H \leq \F_q^*$ such that $1 < |H| < q-1$. 
The following results hold for $\HSOP(\Affq(H), \F_q, \circ, {\cal FC})$:
\begin{itemize}
\item[(a)]
It has polynomial query complexity.
\item[(b)]
It can be solved in quantum polynomial time when $q$ is prime and 
$|H| = \Omega (q/ {\rm polylog}(q))$.
\item[(c)]
It can be solved in quantum polynomial time when $q$ is the power of a fixed prime.
\end{itemize}
\end{corollary}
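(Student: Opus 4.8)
The plan is to derive all three statements as corollaries of the reduction furnished by Theorem~\ref{theorem:frobenius-reduction}, simply by checking in each case that the related $\HSP(\Affq(H), {\cal FC})$ falls under one of the cited solvability results. First I would verify the hypotheses of Theorem~\ref{theorem:frobenius-reduction} for $G = \Affq(H) = \F_q \rtimes H$: the kernel is $K = \F_q$, so $|K| = q$, and the Frobenius complements are the conjugates of $H$, of cardinality $|H|$. The condition required by the theorem is $|H| < |K| - 1 = q - 1$, which is exactly the hypothesis $|H| < q-1$ we are given. Hence the theorem applies and $\HSOP(\Affq(H), \F_q, \circ, {\cal FC})$ is reducible in probabilistic polynomial time to $\HSP(\Affq(H), {\cal FC})$. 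Since a probabilistic polynomial-time reduction preserves both polynomial query complexity and quantum polynomial-time solvability, it now suffices to handle the related $\HSP$ in each of the three regimes.

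For \textbf{(a)}, I would observe that every Frobenius complement $H_x = xHx^{-1}$ is a closed subgroup (as noted in the discussion preceding the statement), so ${\cal FC} \subseteq {\cal C}(\Affq(H))$. By Fact~\ref{fact:ettinger} the problem $\HSP(\Affq(H), {\cal C}(\Affq(H)))$ has polynomial query complexity, and restricting the family of admissible hidden subgroups to ${\cal FC}$ can only lower the query complexity; composing with the polynomial-size reduction of Theorem~\ref{theorem:frobenius-reduction} yields polynomial query complexity for the $\HSOP$, giving~(a). For \textbf{(b)}, the family ${\cal FC}$ of conjugates of $H$ is precisely the family the paper denotes by the same symbol in Fact~\ref{fact:friedl}(a); under the stated hypotheses that $q$ is prime, $1 < |H| \leq q-1$, and $|H| = \Omega(q/\mathrm{polylog}(q))$, Fact~\ref{fact:friedl}(a) solves $\HSP(\Affq(H), {\cal FC})$ in quantum polynomial time, and the reduction transports this to the $\HSOP$, giving~(b).

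For \textbf{(c)}, I would apply Fact~\ref{fact:friedl}(b), which requires that the derived subgroup $G'$ be commutative and that every element of $G'$ have order bounded by a constant. The parenthetical remark in the excerpt already flags that the derived subgroup of $\Affq(H)$ is commutative, so the remaining point is the order bound: when $q = p^k$ is a power of a fixed prime $p$, the derived subgroup is contained in the elementary abelian kernel $\F_q = (\Z/p\Z)^k$, every nonidentity element of which has order exactly $p$, a fixed constant. Thus Fact~\ref{fact:friedl}(b) applies and solves $\HSP(\Affq(H), {\cal C}(\Affq(H))) \supseteq \HSP(\Affq(H), {\cal FC})$ in quantum polynomial time, which the reduction again transfers to the $\HSOP$.

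The only step requiring genuine care is the computation of the derived subgroup in~(c): I would confirm that $[\Affq(H), \Affq(H)]$ lies inside the kernel $\F_q$. Writing a commutator of two affine maps $(b,a)$ and $(b',a')$ and using the semidirect-product multiplication $(k,h)(k',h') = (k + \phi_h(k'), hh')$, the group-element component $hh'$ of any commutator is trivial because $H \leq \F_q^*$ is abelian, so the commutator is purely a translation, i.e.\ an element of $\F_q$. This confirms both the commutativity of $G'$ and the bounded-order property in one stroke; everything else is bookkeeping about which cited fact covers which parameter range.
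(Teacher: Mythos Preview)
Your proposal is correct and follows essentially the same approach as the paper: the paper states that (a)--(c) are consequences of Facts~\ref{fact:ettinger} and~\ref{fact:friedl} via the reduction of Theorem~\ref{theorem:frobenius-reduction}, and notes for (c) that the derived subgroup of $\Affq(H)$ is commutative. Your write-up supplies exactly this argument with additional detail, in particular the explicit verification that commutators in $\Affq(H)$ land in the kernel $\F_q$ (hence $G'$ is elementary abelian of exponent $p$), which the paper leaves to the reader.
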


The case of $\Affq(\{\pm 1\})$, when $q$ is an odd prime power is
particularly interesting.  It turns out that the $\HSOP$ over
$\Affq(\{\pm 1\})$ for the Frobenius complements is essentially the
same problem as the $\HQPP$ over $\F_q$.

\begin{proposition}\label{proposition:hqp}
The following problems are polynomially equivalent:
\begin{enumerate}
\item 
$\HQPP(\F_q)$
\item $\HSOP(\Affq(\{\pm 1\}), \F_q, \circ, {\cal FC})$
\item 
$\HSP(\Affq(\{\pm 1\}),{\cal FC})$
\end{enumerate}
\end{proposition}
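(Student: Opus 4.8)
The plan is to prove the chain of equivalences $(1)\Leftrightarrow(2)$ and $(2)\Leftrightarrow(3)$, the conceptual core being that the level sets of a hidden quadratic coincide with the orbits of a Frobenius complement.

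First I would establish $(1)\Leftrightarrow(2)$, which in fact identifies the two problems. Writing ${\cal P}_u(x)=(x-u)^2-u^2$, one sees that ${\cal P}_u(x)={\cal P}_u(y)$ holds exactly when $y\in\{x,2u-x\}$. On the other hand, the Frobenius complement stabilizing $u$ is $H_u=\{(0,1),(2u,-1)\}$, and under the action $\circ$ its nonidentity element sends $x\mapsto 2u-x$; hence the $H_u$-orbits are the singleton $\{u\}$ together with the pairs $\{x,2u-x\}$. Therefore the partition of $\F_q$ into ${\cal P}_u$-level sets is precisely the partition into $H_u$-orbits, so one and the same function $f$ is an $\HQPP$ oracle hiding ${\cal P}_u$ if and only if it is an $\HSOP$ oracle hiding $H_u$ by symmetries. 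As $u$ runs over $\F_q$, $H_u$ runs over all of ${\cal FC}$, so the promised families match, and the outputs are interconvertible with no query: from $u$ one writes down $H_u$, and from the nonidentity element $(2u,-1)$ of $H_u$ one reads off $2u$ and recovers $u$ by dividing by $2$, using that $2$ is invertible since $q$ is odd.

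For $(2)\Rightarrow(3)$ I would simply invoke Theorem~\ref{theorem:frobenius-reduction}. The group $\Affq(\{\pm 1\})=\F_q\rtimes\{\pm 1\}$ is a Frobenius group with kernel $K=\F_q$ and complements of size $2$; since $\{\pm 1\}$ is a proper subgroup of $\F_q^*$ we have $q-1>2$, so the hypothesis $|H|<|K|-1$ holds and the theorem reduces $\HSOP(\Affq(\{\pm 1\}),\F_q,\circ,{\cal FC})$ to $\HSP(\Affq(\{\pm 1\}),{\cal FC})$ in probabilistic polynomial time.

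The remaining and most interesting direction is $(3)\Rightarrow(2)$: from an $\HSP$ oracle I must build an $\HSOP$ oracle on $\F_q$. Suppose $f_{\rm HSP}$ hides $H_c\in{\cal FC}$, so its level sets are the right cosets $H_c g$. A direct computation gives $H_c(x,1)=\{(x,1),(2c-x,-1)\}$ and $H_c(x,-1)=\{(x,-1),(2c-x,1)\}$, and the decisive point is that replacing $x$ by $2c-x$ merely interchanges these two cosets. I would therefore define the candidate $\HSOP$ oracle $f(x)$ to be the unordered pair $\{f_{\rm HSP}((x,1)),f_{\rm HSP}((x,-1))\}$, which costs two queries to evaluate. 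Then $f(x)=f(x')$ holds if and only if $\{x,2c-x\}=\{x',2c-x'\}$, i.e. exactly when $x$ and $x'$ lie in the same $H_c$-orbit (the fixed point $c$ being detected by the two entries of the pair coinciding); thus $f$ hides $H_c$ by symmetries, and running the $\HSOP$ solver returns $H_c$. I expect the main obstacle to be precisely this step: spotting that the two cosets $H_c(x,\pm 1)$ can be folded into a single value on $\F_q$ that is invariant under the orbit involution, since the naive section $x\mapsto f_{\rm HSP}((x,1))$ is injective and loses all orbit information. The numerical hypothesis $|H|<|K|-1$ needed earlier causes no difficulty here, as it is forced by $\{\pm 1\}$ being a proper subgroup of $\F_q^*$.
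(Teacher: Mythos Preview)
Your proof is correct and follows essentially the same route as the paper: the identification of level sets of ${\cal P}_u$ with $H_u$-orbits for $(1)\Leftrightarrow(2)$, the Frobenius reduction for $(2)\Rightarrow(3)$, and the symmetric pairing of $f_{\rm HSP}$-values on $(x,1)$ and $(x,-1)$ for $(3)\Rightarrow(2)$ (the paper uses $\min$ in place of your unordered pair, which is equivalent). One small refinement: for $(2)\Rightarrow(3)$ the paper observes that any two distinct elements of $\F_q$ already form an ${\cal FC}$-strong base, so the reduction is deterministic rather than merely probabilistic via Theorem~\ref{theorem:frobenius-reduction}.
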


\begin{proof}
The first two problems are equivalent as we claim that every
$f : \F_q \rightarrow S$, as oracle input for
$\HQPP(\F_q)$ hides the polynomial ${\cal P}_u$ if and only if as oracle input for
$\HSOP(\Affq(\{\pm 1\}), \F_q, \circ, {\cal FC})$  it hides the
Frobenius complement $H_u$. To see this, observe that
the level sets of ${\cal P}_u$ are of the form $\{x+u, -x +u\}$,
which are exactly the orbits of $H_u$. Therefore, we have the 
following equivalences:
\begin{eqnarray*}
f {\rm ~hides~} p_u 
& 
\Longleftrightarrow 
& 
\pi_f = \{ \{x+u, -x +u\} : x \in \F_q \} 
\\
& 
\Longleftrightarrow 
& 
\pi_f^* = H_u 
\\
& 
\Longleftrightarrow 
& 
f {\rm ~hides~} H_u {\rm ~by~symmetries}
\end{eqnarray*}
The reduction from the second problem to the third one is provided by
Theorem~\ref{theorem:frobenius-reduction}.  Note that we can construct
a base deterministically by choosing two different elements of order
two. For a reduction in the reverse direction, consider a function $f$
on $\Affq(\{\pm 1\})$ which hides the subgroup $H_{u}=\{(0,1),(2u,-1)
\}$. Then all the collisions taken by $f$ on elements of $\Affq(\{\pm
1\})$ are $f(2u-b,-1)=f(b,1)$ for $b\in \F_q$. We define a new
function $f^\circ$ on $\F_q$ as
$f^\circ(b)=\min\left(f(b,1),f(b,-1)\right)$.  Examining the possible
collisions gives that for $b\neq b'\in \F_q$ we have
$f^\circ(b)=f^\circ(b')$ if and only if $b'=2u-b=(2u,-1)\circ b$.
\end{proof}

Together with
Corollary~\ref{corollary:affine} (c) the statements of this proposition
imply the following result.

\begin{theorem}\label{theorem:hqp-small}
$\HQPP(\F_q)$ is solvable 
in quantum polynomial time over 
  constant characteristic fields.
\end{theorem}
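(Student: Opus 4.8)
The plan is to combine the problem equivalence of Proposition~\ref{proposition:hqp} with the algorithmic result of Corollary~\ref{corollary:affine}(c), and then to dispose of a few boundary cases directly.

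First I would treat the generic situation, where $q$ is an odd prime power with $q > 3$, so that $H = \{\pm 1\}$ genuinely has order two and satisfies the hypothesis $1 < |H| < q-1$ of Corollary~\ref{corollary:affine}. By the equivalence (1)$\Leftrightarrow$(2) of Proposition~\ref{proposition:hqp}, a function $f$ serving as oracle input for $\HQPP(\F_q)$ and hiding ${\cal P}_u$ is precisely a function hiding the Frobenius complement $H_u$ as an oracle input for $\HSOP(\Affq(\{\pm 1\}), \F_q, \circ, {\cal FC})$, and from $H_u$ one reads off $u$, hence ${\cal P}_u$, in polynomial time. Next I would invoke Corollary~\ref{corollary:affine}(c): when $q$ is a power of a fixed prime $p$, this $\HSOP$ admits a quantum polynomial time algorithm. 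Since ``constant characteristic'' means exactly that $p$ is held fixed while $q = p^k$ grows, chaining the reduction with the corollary yields a quantum polynomial time algorithm for $\HQPP(\F_q)$ in this regime.

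The step I expect to require the most care is handling the cases excluded by the inequality $1 < |H| < q-1$. The only odd field ruled out is $q = 3$, where $|H| = q-1$ puts us in the sharply $2$-transitive (hard) regime; but this is a single field of constant size, so the hidden $u$ ranges over a constant-size set and can be identified by brute-force querying. The characteristic-two case is degenerate rather than hard: there $2u = 0$, so ${\cal P}_u(x) = x^2$ for every $u$, and since $x \mapsto x^2$ is a bijection the partition $\pi_f$ is always the partition into singletons; every $u$ is then a valid answer, so one may simply output $u = 0$. Together with the generic argument, these boundary cases establish the theorem.
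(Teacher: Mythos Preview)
Your proposal is correct and follows the same route as the paper: the paper's own argument is simply the one-line observation that Proposition~\ref{proposition:hqp} together with Corollary~\ref{corollary:affine}(c) yields the theorem. You additionally handle the boundary cases $q=3$ and characteristic $2$ that fall outside the hypothesis $1<|H|<q-1$ of Corollary~\ref{corollary:affine}; the paper glosses over these (it restricts attention to odd $q$ before Proposition~\ref{proposition:hqp} and does not mention $q=3$), so your treatment is in fact slightly more complete.
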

We observe that in contrast to the constant characteristic case, the
\HQPP{} appears to be difficult over prime fields $\F_p$, as it is
equivalent to the \HSP{} in the dihedral group $D_{2p}\cong
\Affp(\{\pm 1\})$.

Note that in~\cite{vDHI03} van~Dam, Hallgren and Ip gave a polynomial
time solution to a problem which can be considered as a version of
$\HSOP(\Affq(H), \F_q, \circ, {\cal FC})$ where the function hiding
the complement is promised to be a shifted multiplicative character
$\chi:\F_q^*\rightarrow \C^*$. This strong promise (in our oracle
model we can only check for equality of the output values)  makes the problem
efficiently solvable even in the case $H=\{\pm 1\}$ where the $\HSOP$
with general hiding function appears to be difficult.

We also remark that strong bases in the Frobenius group $\Affq(H)$
with $|H|=(q-1)/2$ play an important role (under the name {\em
  factoring sets}) in certain algorithms for factoring univariate
polynomials over $\F_q$, see~\cite{Camion83}. This is because a set
$B$ which separates two (unknown) field elements $u$ and $v$ can be
used to find a proper decomposition of a polynomial having both $u$
and $v$ as roots.  In fact, an efficient deterministic construction of
strong bases for such affine groups over prime fields would imply an
efficient deterministic algorithm for factoring polynomials over finite
fields.

\subsection{Multivariate quadratic hidden polynomials}
In this part, we reduce the $\HPP$ for multivariate polynomials of
degree at most two to the univariate $\HQPP$.  As already noted,
adding a constant term does not change the level sets, therefore we
consider polynomials with zero constant term. Thus, we assume that the
hidden polynomial is of the form
\begin{equation}\label{multipolynom}
{\cal P}(x_1,\ldots,x_n)= \sum_{1\leq i\leq j\leq n}a_{ij}x_{i}x_j+
\sum_{1\leq k \leq n}b_kx_k.
\end{equation}
Also, as the partition $\pi_{\cal P}$ remains the same
when we multiply all coefficients with the same non-zero
element from $\F_q$, we consider that the $\HPP$ has been solved
if we determine the ratios between all the pairs
of the $n(n+1)/2$ coefficients $a_{ij}$ and $b_k$.

\begin{proposition}
The problem $\HPP(\F_q,n,2)$ can be reduced on a quantum computer
to $O(n^2)$ instances of $\HQPP(\F_q)$ in time $(n+\log q)^{O(1)}$.
\end{proposition}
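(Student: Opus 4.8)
The plan is to reduce the multivariate problem to univariate ones by restricting the hidden polynomial to affine lines. Over a field of odd characteristic write ${\cal P}(x)=x^\top A x+b^\top x$, where $A$ is the symmetric matrix with $A_{ii}=a_{ii}$ and $A_{ij}=A_{ji}=a_{ij}/2$; the goal is to determine $(A,b)$ up to a common scalar. For a base point $c$ and a direction $v$, the restriction $t\mapsto {\cal P}(c+tv)=(v^\top A v)\,t^2+(2Ac+b)^\top v\,t+\mathrm{const}$ is a univariate quadratic in $t$. When $v^\top A v\ne 0$ its level sets coincide with those of $t^2-2ut$ for $u=-\frac{(2Ac+b)^\top v}{2\,v^\top A v}$, so composing $f$ with the line $t\mapsto c+tv$ yields a legitimate $\HQPP(\F_q)$ oracle whose solution hands us the single ratio $r_{c,v}:=\frac{(2Ac+b)^\top v}{v^\top A v}$ (equal to $-2u$).

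First I would fix a basis $v_1,\dots,v_n$ of $\F_q^n$ and, for each $j$, call the $\HQPP$ subroutine on the lines through the base points $0,v_1,\dots,v_n$ in direction $v_j$. Subtracting $r_{0,v_j}$ from $r_{v_k,v_j}$ cancels the base-point-independent part and yields $\frac{2\,v_k^\top A v_j}{v_j^\top A v_j}$, i.e. the ratio $\beta_{kj}/\beta_{jj}$ of the Gram-matrix entries $\beta_{kj}=v_k^\top A v_j$. Because $A$ is symmetric, $\beta_{kj}=\beta_{jk}$, so dividing $\beta_{kj}/\beta_{jj}$ by $\beta_{jk}/\beta_{kk}$ recovers the diagonal ratio $\beta_{kk}/\beta_{jj}$ whenever $\beta_{kj}\ne 0$. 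Chaining these determines the whole Gram matrix $\beta$ up to one global scalar $\lambda$; since $\beta=V^\top A V$ with $V=[v_1|\cdots|v_n]$ known and invertible, inverting gives $A$ up to $\lambda$. Finally the values $r_{0,v_j}=b^\top v_j/\beta_{jj}$ give an invertible linear system $V^\top(b/\lambda)=(r_{0,v_j}\,\beta_{jj}/\lambda)_j$ that recovers $b$ up to the same scalar. This uses $n(n+1)=O(n^2)$ calls to $\HQPP(\F_q)$ plus $\mathrm{poly}(n,\log q)$ classical linear algebra.

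The main obstacle is ensuring the $\HQPP$ promise actually holds on every line used: the subroutine is meaningful only when $v^\top A v\ne 0$ (otherwise the restriction is affine-linear in $t$, hence injective, matching no ${\cal P}_u$), and the chaining step needs every $\beta_{kj}=v_k^\top A v_j$ nonzero. Both $v^\top A v$ and $v_k^\top A v_j$ are nonzero polynomials in the basis entries provided $A\ne 0$, so a uniformly random basis makes all of them nonzero with high probability by Schwartz--Zippel; when $q$ is too small I would pass to a small extension $\F_{q^m}$ with $m=O(\log_q n)$, which affects neither the level-set structure nor the time bound. The purely linear degenerate case $A=0$ must be detected and handled separately, since there $\pi_f$ consists of parallel hyperplanes and $b$ is recovered up to scalar directly (e.g. by abelian Fourier sampling) without invoking $\HQPP$. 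I would also note that the construction assumes odd characteristic, where $A$ and ${\cal P}_u$ are nondegenerate; characteristic two, where $t\mapsto t^2$ is additive and ${\cal P}_u$ collapses, needs a separate treatment.
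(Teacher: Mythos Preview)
Your approach—restricting $\mathcal{P}$ to random affine lines to obtain $\HQPP$ instances, reading off the ratios $(2Ac+b)^\top v / v^\top A v$, and chaining through the symmetry $v_k^\top A v_j = v_j^\top A v_k$ to pin down $A$ (and then $b$) up to a global scalar—is a genuinely different route from the paper's. The paper proceeds by an explicit case analysis: it first treats $n=2$ via a handful of deterministic substitutions $(x,0)$, $(0,x)$, $(x,1)$, $(x,x)$, $(x,\alpha x)$ and branches on which of $a_{11},a_{22},a_{12}$ vanish, then bootstraps to $n=3,4$ and finally to general $n$ by locating some nonzero coefficient and reducing to the small cases. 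Your linear-algebraic scheme is cleaner and more uniform where it applies; the paper's scheme is more robust in that it never relies on the field being large and absorbs the degenerate and characteristic-$2$ cases into the same framework.

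There is, however, a genuine gap in your handling of small $q$. You propose to ``pass to a small extension $\F_{q^m}$'' when Schwartz--Zippel fails, but the oracle $f$ is only defined on $\F_q^n$: you cannot evaluate it at points with coordinates in $\F_{q^m}\setminus\F_q$, so the line $t\mapsto c+tv$ with $v\in\F_{q^m}^n$ simply cannot be queried. Your chaining step needs the off-diagonal Gram entries $\beta_{kj}=v_k^\top A v_j$ to be nonzero, and over a small field a random basis will not achieve this (for a diagonal $A$ the off-diagonals vanish in the standard basis, and over $\F_3$ there is not enough room to randomize away $O(n^2)$ such constraints). Thus the argument as written does not cover $q=O(n^2)$. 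The paper avoids this entirely by never relying on genericity: its deterministic substitutions and case splits work for every $q$. Your acknowledged omission of characteristic $2$ is a second gap; the paper handles $q=2$ by a direct argument (absorbing $x_i^2=x_i$ and testing coefficients individually) rather than via the $\HQPP$ subroutine. Both gaps would need to be filled—most naturally by a small-$q$ case analysis in the spirit of the paper—before your proof covers the full statement.
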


\begin{proof}
In order to
simplify the following discussions we define $a_{ji}$ to be $a_{ij}$
for $j>i$. Additionally, if $q=2$ then we also assume
$a_{ii}=0$ because $x^2=x$ holds over $\F_2$. We assume 
that we have a procedure
$\cal R$ for determing the coefficients of a univariate quadratic
polynomial up to a common factor. Its oracle input is a function
on $\F_q$ that has the same level set structure as a polynomial of
the form $ax^2+bx$.  We assume that $\cal R$ decides whether $a$ is
zero and if $a\neq 0$ then $\cal R$ returns the quotient $b/a$.

We start with the case $n=2$. We have an oracle with the same
level sets as the polynomial
\[
{\cal P}(x_1,x_2)=a_{11}x_1^2+a_{22}x_2^2+a_{12}x_1x_2+b_1x_1+b_2x_2\,.
\]
We use the oracle with the inputs $(x_1,x_2):=(x,0)$. This
way, we obtain an instance of  $\HQPP$ for the
univariate polynomial $a_{11}x^2+b_1x$. We use $\cal R$ to decide
whether $a_{11}$ is zero or not and if $a_{11}\neq 0$ then we compute the
quotient $b_1/a_{11}$. Furthermore, we set $(x_1,x_2):=(x,1)$ for the
inputs of the oracle to compute $(a_{12}+b_1)/a_{11}$ in the second
step. From this result we can easily compute the quotient
$a_{12}/a_{11}$.  Similarly, using the substitutions
$(x_1,x_2):=(0,x)$ and $(x_1,x_2):=(1,x)$ we decide whether $a_{22}$
is zero or not.  If $a_{22}\neq 0$ then we obtain the quotients
$a_{12}/a_{22}$ and $b_2/a_{22}$. We now consider the following
different cases.

\begin{itemize}
\item $a_{11},a_{22}\neq 0$: If $a_{12}\neq 0$ then we have determined
   all coefficients of $\cal P$ up to a common factor. If $a_{12}=0$ then we
  use the inputs $(x_1,x_2):=(x,x)$ and we obtain $\HQPP$ for
   $(a_{11}+a_{22})x^2+(b_1+b_2)x$. With $\cal R$ we can determine
   whether $a_{11}+a_{22}$ is zero or not. If it is non-zero then we
   find an element $r\in \F_q$ such that
    $b_1+b_2=r(a_{11}+a_{22})$. When we write $b_i/a_{ii}=c_i$ then the
    equation $(r-c_1)a_{11}=(c_2-r)a_{22}$ follows. Since $a_{ii}\not =
    0$, we can compute easily all coefficients of $\cal P$ up to a common
    factor. If $a_{11}+a_{22}=0$ then we also can compute all coefficients
    easily. 

\item $a_{11}\neq 0,a_{22}=0$: If $a_{12}=0$ then we use the inputs
  $(x_1,x_2):=(x,x)$ and we obtain $\HQPP$ for the polynomial
  $a_{11}x^2+(b_1+b_2)x$. With $\cal R$ we can determine the quotient
  $(b_1+b_2)/a_{11}$ and together with the already known value
  $b_1/a_{11}$ we obtain the missing $b_2/a_{11}$.  If $a_{12}\neq 0$
  then we pick $\alpha\in \F_q\setminus\{0\}$ such that $1+\alpha
  a_{12}/a_{11}\neq 0$ and we use the inputs $(x_1,x_2):=(x, \alpha
  x)$. We obtain $\HQPP$ for $(a_{11}+\alpha a_{12})x^2+(b_1+\alpha
  b_2)x$, which can be used to find $r\in \F_q$ such that $(b_1+\alpha
  b_2)=r(a_{11}+\alpha a_{12})$. This gives us the missing fraction
  $b_2/a_{11}$.  The case $a_{22}\neq 0$ and
  $a_{11}=0$ can be treated in a similar way.

\item $a_{11}=a_{22}=0,q\not=2$: We use the inputs $(x_1,x_2):=(x,x)$
  and obtain $\HQPP$ for the polynomial $a_{12}x^2+(b_1+b_2)x$ that
  can be used to decide whether $a_{12}=0$ or not. If it is non-zero
  then we compute $(b_1+b_2)/a_{12}$. Furthermore, we can choose
  $\alpha\in\F_q^\times, \alpha \not =1$, and we use the inputs
  $(x_1,x_2):=(x,\alpha x)$ to compute the fraction $(b_1+\alpha
  b_2)/(\alpha a_{12})$.  From these two fractions we can determine
  $b_1/a_{12}$ and $b_2/a_{12}$. If $a_{12}=0$ then we have the
  polynomial $b_1x_1+b_2x_2$ and we can determine the ratio between
  $b_1$ and $b_2$ by the algorithm for the abelian HSP over the
  additive group of $\F_q^2$. Note that we use a quantum computer
  for an efficient implementation of this step of the reduction.

\item $a_{11},a_{22}=0,q=2$: We use the inputs $(x_1,x_2):=(x,0)$ and
  we obtain $\HPP$ for the polynomial $b_1x$. We can easily test
  whether it is constant, i.e. $b_1=0$, or not. The coefficient $b_2$
  can be computed in a similar way. The input $(x_1,x_2):=(x,1)$ give
  us $a_{12}+b_1$.

\end{itemize}

This shows that we can find all coefficients of a bivariate polynomial
up to a common factor when we use $\cal R$ only a constant number of
times and some additional operations, which can be performed
efficiently on a quantum computer. 

Next we consider the case $n=3$. Substituting zero in $x_3$, we can
use the algorithm for the bivariate case to test whether
$a_{11}=0$. If $a_{11}\neq 0$ we can determine the quotient of the
remaining coefficients (except for $a_{23}$) and $a_{11}$ by
substituting zero in $x_2$ or $x_3$ and using the algorithm for the
bivariate case. For $a_{23}$ we can substitute $(x_1,x_2,x_3)=(x,y,y)$ and
obtain the polynomial 
\[
a_{11}x^2 + (a_{12}+a_{13})xy + (a_{22}+a_{23}+a_{33}) y^2 +b_1x+(b_2+b_3)y \,.
\]
Then the algorithm for the bivariate case gives us
$(a_{22}+a_{23}+a_{33})/a_{11}$ from which we can compute
$a_{23}/a_{11}$. The cases where any of the coefficients
$b_1$, $a_{22}$, $b_2$, $a_{33}$, or $b_3$ is non-zero can be treated
in a similar way. It remains to handle the case of a polynomial of the
form $a_{12}x_1x_2+ a_{13}x_1x_3+a_{23}x_2x_3$. Then substituting $1$
in $x_3$ gives the polynomial $a_{12}x_1x_2+ a_{13}x_1+a_{23}x_2$ and
the ratio between the three coefficients can be found by the bivariate
algorithm.

The case $n=4$ can be handled as follows. We apply the algorithm of
the preceding paragraph to the four polynomials obtained by
substituting zero in $x_1$, $x_2$, $x_3$, and $x_4$, respectively.
Observe that these steps determine the ratio between pairs of
coefficients that have indices that fit in a three-element subset of
$\{1,2,3,4\}$.  By transitivity, we are done unless our polynomial is
of the form $a_{12}x_1x_2+a_{34}x_3x_4$, $a_{13}x_1x_3+a_{24}x_2x_4$,
or $a_{14}x_1x_4+a_{23}x_2x_3$. If it
is of the form $a_{12}x_1x_2+a_{34}x_3x_4$ then we can determine the
ratio between the coefficients by using the bivariate algorithm by
substituting $x_1$ in $x_2$ and $x_3$ in $x_4$. The two remaining
polynomials can be treated in a similar way. 

Finally we consider the case $n>4$. Using $O(n^2)$ applications of the
bivariate algorithm, we find indices $i\neq j$ such that at least one
of $a_{ii}$, $b_i$ or $a_{ij}$ is non-zero. The ratio between this
coefficient and any other can be computed using the algorithm for two,
three, or four variables. The cost of these steps amounts to $O(n^2)$
applications of the procedure $\cal R$ and a polynomial number of
other operations.
\end{proof}

\begin{theorem}\label{theorem:multivariate}
$\HPP(\F_q,n,2)$ can be solved by a polynomial time quantum algorithm
  over fields of constant characteristic.
\end{theorem}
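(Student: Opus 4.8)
The plan is to obtain the theorem as a direct composition of the preceding proposition with Theorem~\ref{theorem:hqp-small}. That proposition shows that $\HPP(\F_q,n,2)$ reduces, on a quantum computer and in time $(n+\log q)^{O(1)}$, to $O(n^2)$ invocations of a subroutine $\cal R$ that solves the univariate quadratic problem: $\cal R$ receives a function on $\F_q$ with the level-set structure of some $ax^2+bx$, decides whether $a=0$, and returns the ratio $b/a$ when $a\neq 0$. Thus the entire task reduces to supplying an efficient implementation of $\cal R$ over fields of constant characteristic, and Theorem~\ref{theorem:hqp-small} is precisely the ingredient that provides it.

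First I would instantiate $\cal R$ from the $\HQPP$ solver guaranteed by Theorem~\ref{theorem:hqp-small}. When $a\neq 0$ the level sets of $ax^2+bx$ coincide with those of the monic polynomial $x^2+(b/a)x={\cal P}_u$ with $u=-b/(2a)$ in odd characteristic (and with the analogous identification in characteristic $2$), so recovering ${\cal P}_u$ yields $b/a=-2u$ at once. The only point that goes beyond the bare $\HQPP$ promise is the requirement that $\cal R$ also certify the degenerate case $a=0$. Here I would read the answer off from the closed symmetry subgroup: via Proposition~\ref{proposition:hqp} the input corresponds to an $\HSSP$ over $\Affq(\{\pm 1\})$, and the subgroup $\pi_f^*$ recovered by the algorithm is a genuine Frobenius complement exactly when $a\neq 0$, whereas the collapsed partitions produced by $a=0$ (all singletons when $b\neq 0$, a single block when $b=0$) yield the trivial subgroup and the whole group respectively and are thereby distinguished. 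In all cases the underlying procedure runs in quantum polynomial time whenever $q$ is a power of a fixed prime.

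Finally I would account for the running time and success probability and isolate the one point needing care. Each of the $O(n^2)$ calls to $\cal R$ costs $(\log q)^{O(1)}$ and the surrounding reduction is $(n+\log q)^{O(1)}$, so the total time is polynomial in $n$ and $\log q$; since $\cal R$ is a bounded-error quantum procedure, I would amplify each call to error $O(\epsilon/n^2)$ at a logarithmic overhead and take a union bound so that all invocations succeed together with probability at least $1-\epsilon$. The main obstacle is conceptual bookkeeping rather than computation: one must verify that the $\HQPP$ machinery of Theorem~\ref{theorem:hqp-small}, stated for genuinely monic quadratics, really supplies the full interface of $\cal R$ — in particular the ability to detect $a=0$ — so that the case analysis inside the preceding proposition is legitimately driven. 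Once this is checked, the theorem follows immediately.
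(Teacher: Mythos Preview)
Your proposal is correct and matches the paper's approach: the paper states Theorem~\ref{theorem:multivariate} without a separate proof, intending it as the immediate combination of the preceding proposition with Theorem~\ref{theorem:hqp-small}, which is exactly what you do. Your additional bookkeeping --- extending the $\HQPP$ machinery to detect the degenerate $a=0$ inputs that the procedure $\cal R$ must handle, and amplifying each of the $O(n^2)$ bounded-error calls before taking a union bound --- fills in details the paper leaves implicit, and your resolution via the underlying $\HSP$ over $\Affq(\{\pm 1\})$ (where the trivial subgroup and the whole group are both closed and are recovered by Fact~\ref{fact:friedl}(b)) is sound.
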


\section{Function graph groups and the HPGP}\label{hpfgp}
For dealing with the HPGP we define a family of semidirect product
groups that we call function graph groups. We show that each instance
of the ${\HPFGP}(\F_q,1,d)$ can be reduced to the $\HSP$ for an
appropriate function graph group corresponding to univariate
polynomials of degree at most $d$.  These special function graph groups
are semidirect products of groups of $q$-power order. Therefore, they
cannot be Frobenius groups.

\subsection{The HPGP as HSSP over function graph groups}\label{subsec hpfgp}
It will be convenient to work in a more general setting. 
\begin{definition}
{\rm Let $A$ and $B$ be two abelian groups.  The family of functions
  mapping $A$ to $B$ forms an abelian group ${\cal F}$ with the addition
  defined as $(Q_1+Q_2)(x)=Q_1(x)+Q_2(x)$.  For every $t\in A$, the
  shift map $a_t$ defined as $(a_tQ)(x)=Q(x-t)$ is an automorphism of
  this group.  A {\em function group} from $A$ to $B$ is a subgroup
  $K$ of ${\cal F}$ which is closed under the shift maps.  We denote
  the restriction of $a_t$ to $K$ also with $a_t$. Then the map
  $t\mapsto a_t$ is a homomorphism from $A$ to the automorphism group
  of $K$.  The {\em function graph group} $\Fg(K)$ is defined as the
  semidirect product ${K}\rtimes_{t\mapsto a_t} A$.  }
\end{definition}
The multiplication of $\Fg(K)$ is given by the rule
\begin{equation*}\label{def concat}
(Q_1,t_1)(Q_2,t_2)=(Q_1+a_{t_1}Q_2,t_1+t_2).
\end{equation*}
The  {\em shifting action} $\circ$ of $\Fg(K)$ on $A\times B$ is defined as
\begin{equation*}\label{eq action}
(Q,t)\circ(x,y)=(x+t,y+Q(x+t)).
\end{equation*}
For $t\in A$ and $Q\in K$, we set $a_{Q,t}=(Q-a_tQ,t)$, the conjugate
of the element $(0,t)$ by $(Q,0)$. Furthermore, let $A_{Q}=\{a_{Q,t}:
t\in A\}$ be the conjugate of the subgroup $\{(0,t) : t\in A\}$ by
$(Q,0)$.  Then every $A_Q$ is a subgroup of $\Fg(K)$ that is
complementary to the normal subgroup $\{(Q,0) : Q\in K\}$. We call
them {\em standard complements}, and we denote by ${\cal SC}$ the
family $\{A_Q : Q \in K\}$ of the standard complements.

We are now ready to show a connection between function graph problems
and the orbits of the standard complements in function graph groups.

\begin{proposition}\label{proposition:hpfgp-hsop}
Let $\Fg(K)$ be a function graph group, let $\circ$ be its shifting
action on $A\times B$, and let $A_Q$ be a standard complement. Then
$A_Q$ is closed and
the orbits of $A_Q$ are the level sets of the function $f :
(x,y)\mapsto y-Q(x)$ on $A\times B$.
\end{proposition}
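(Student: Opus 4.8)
The plan is to reduce both assertions to a single explicit computation of how a generator $a_{Q,t} = (Q - a_t Q, t)$ of $A_Q$ acts on a point. First I would substitute this element into the definition of the shifting action and simplify using the shift convention $(a_t Q)(x) = Q(x-t)$, which yields the telescoping identity $(Q - a_t Q)(x+t) = Q(x+t) - Q(x)$ and hence
\[
a_{Q,t} \circ (x,y) = \bigl(x+t,\ y + Q(x+t) - Q(x)\bigr).
\]
Evaluating $f$ on the right-hand side gives $\bigl(y + Q(x+t) - Q(x)\bigr) - Q(x+t) = y - Q(x) = f(x,y)$, so $f$ is constant along each $A_Q$-orbit; thus every orbit is contained in a level set of $f$.

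For the reverse inclusion I would take two points $(x,y)$ and $(x',y')$ lying in the same level set, so $y - Q(x) = y' - Q(x')$, and solve for the group element moving one to the other. Choosing $t = x' - x$ makes the first coordinate of $a_{Q,t} \circ (x,y)$ equal to $x'$, and its second coordinate becomes $y + Q(x') - Q(x) = y'$ by the level-set equation. Hence each orbit is exactly a level set, establishing that the $A_Q$-orbits are precisely the level sets of $f$.

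It remains to prove that $A_Q$ is closed. Since the orbit partition of $A_Q$ coincides with $\pi_f$, its closure satisfies $A_Q^{**} = (A_Q^{*})^{*} = \pi_f^{*}$, the group of symmetries of $\pi_f$; as $A_Q \subseteq A_Q^{**}$ always holds, I only need the inclusion $\pi_f^{*} \subseteq A_Q$. Because $A\times B$ is finite and the action is by bijections, an element $(R,s)$ lies in $\pi_f^{*}$ exactly when it fixes every class setwise, which is equivalent to $f\bigl((R,s)\circ(x,y)\bigr) = f(x,y)$ for all $(x,y)$. Expanding this condition reduces it to $R(x+s) = Q(x+s) - Q(x)$ for all $x$, that is $R = Q - a_s Q$, so $(R,s) = a_{Q,s} \in A_Q$, completing the argument. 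The only delicate point is the passage from ``stabilizes each class'' to the pointwise identity for $f$, which rests on finiteness and bijectivity of the action; everything else is bookkeeping with the shift convention.
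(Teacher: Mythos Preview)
Your proof is correct, and the orbit computation is essentially identical to the paper's (the paper just compresses it into a chain of equivalences rather than doing the two inclusions separately). For closedness the two arguments diverge slightly: the paper argues by contradiction, using that $A_Q$ is a complement of the normal subgroup $\{(Q',0):Q'\in K\}$ to reduce to showing that no nonzero element $(Q',0)$ can lie in $A_Q^{**}$, which is immediate since such an element fixes the first coordinate of every point while changing the second. You instead compute $\pi_f^{*}$ directly from the action formula and find it equals $A_Q$. Your route is more self-contained and avoids invoking the complement structure; the paper's is marginally shorter once that structural observation is in hand. One minor remark: the ``delicate point'' you flag---passing from setwise stabilization of each class to the pointwise identity $f\bigl((R,s)\circ(x,y)\bigr)=f(x,y)$---does not actually require finiteness. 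For any group action on any set, a bijection sending each block of a partition into itself must send it onto itself, so the two formulations of $\pi_f^{*}$ coincide in full generality.
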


\begin{proof}
Assume that $A_Q$ is not closed. Then, as $A_Q$ is a complement of
$\{(Q',0) : Q'\in K\}$, there exists $Q'\in K\setminus \{0\}$ such
that $(x,y+Q'(x))=(Q',0)\circ(x,y)\in A_Q\circ(x,y)$ for every pair
$(x,y)\in A\times B$. This is a contradiction since
$a_{Q,t}(x,y)=(x,y')$ is only possible if $t=0$ and $y'=y$.  

To see
the second part of the statement, observe that $f(x,y) = f(x',y')$ iff
$\exists t \in A: ~ (x' , y') = (x+t, y-Q(x)+Q(x+t))$ iff $\exists t
\in A: ~ (x' , y') = a_{Q,t}\circ(x,y)$ iff $(x' , y') \in A_Q \circ
(x,y)$.
\end{proof}

We now specialize function graph groups to polynomials which relate
them to the \HPFGP{}.  Let $A$ and $B$ be the additive group of $\F_q$
and let $K$ be $\Fqd$, the set of polynomials of degree at most
$d$. Observe that we include also polynomials with non-zero constant
terms in order to be closed under the shifts. Then
Proposition~\ref{proposition:hpfgp-hsop} translates to the following
statement.
\begin{proposition}\label{proposition:idontknow}
Let $f : \F_q \times \F_q\rightarrow S$ be a function. Then $f$ hides for
$\HPFGP(\F_q, 1, d)$ the polynomial $Q$ if and only if for
$\HSOP(\Fg(\Fqd), \F_q \times \F_q, \circ, {\cal SC})$ it hides the
standard complement $A_Q$ by symmetries.
\end{proposition}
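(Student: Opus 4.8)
The plan is to derive Proposition~\ref{proposition:idontknow} as a direct instantiation of the already-established Proposition~\ref{proposition:hpfgp-hsop}, specialized to $A=B=\F_q$ (additive) and $K=\Fqd$. The only genuinely new point to verify is that $\Fqd$ qualifies as a \emph{function group} in the sense of the definition, i.e.\ that it is a subgroup of the additive group of all functions $\F_q\to\F_q$ and that it is closed under the shift maps $a_t$. Closure under addition is clear since the degree of a sum of two polynomials of degree at most $d$ is again at most $d$. For the shift maps, I would observe that $(a_tQ)(x)=Q(x-t)$ is obtained by substituting $x-t$ for $x$, which preserves the degree; this is precisely why the definition insists on including polynomials with nonzero constant term in $K$, since shifting generally produces a nonzero constant term even when $Q$ has none. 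Once this is checked, $\Fg(\Fqd)$ is a well-defined function graph group and its standard complements $A_Q$ and shifting action $\circ$ are exactly those of Proposition~\ref{proposition:hpfgp-hsop}.

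With that in place, the substance of the argument is a matching of definitions. The oracle $f$ hides $Q$ for $\HPFGP(\F_q,1,d)$ means, by Definition~\ref{def 1}, that
\[
f(x_1,y_1)=f(x_2,y_2)\iff y_1-Q(x_1)=y_2-Q(x_2),
\]
i.e.\ the level sets of $f$ coincide with the level sets of the function $(x,y)\mapsto y-Q(x)$. By Proposition~\ref{proposition:hpfgp-hsop}, these level sets are exactly the orbits of the standard complement $A_Q$ under $\circ$, so the partition $\pi_f$ equals the orbit partition $A_Q^{*}$. On the other side, $f$ hides $A_Q$ by symmetries for $\HSOP(\Fg(\Fqd),\F_q\times\F_q,\circ,{\cal SC})$ means precisely, by the definition of the $\HSOP$ and the discussion following it, that $\pi_f^{*}=A_Q$, equivalently that the level sets of $f$ are the $A_Q$-orbits. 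These two conditions are therefore the same statement, and the biconditional follows.

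One small gap to close is that being an oracle input for the $\HSOP$ requires $A_Q$ to be a \emph{closed} subgroup (the family ${\cal SC}$ consists of closed subgroups), and that $\pi_f$ be a closed partition; both of these are supplied by the first part of Proposition~\ref{proposition:hpfgp-hsop}, which asserts that each $A_Q$ is closed, so its orbit partition is closed and has $A_Q$ as its group of symmetries. The main obstacle, such as it is, is purely bookkeeping: making sure that the identification $A=B=\F_q$ is consistent with the typing in both problem definitions, that the forms ``$y_1-Q(x_1)=y_2-Q(x_2)$'' and ``$A_Q\circ(x,y)$'' line up without a hidden sign or shift discrepancy, and that no promise beyond closedness is being silently assumed. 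Since Proposition~\ref{proposition:hpfgp-hsop} has already done the real work of computing the orbits, I expect the specialization itself to be routine.
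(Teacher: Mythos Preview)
Your proposal is correct and follows exactly the paper's approach: the paper states Proposition~\ref{proposition:idontknow} as a direct specialization of Proposition~\ref{proposition:hpfgp-hsop} with $A=B=\F_q$ and $K=\Fqd$, noting (as you do) that polynomials with nonzero constant term are included so that $K$ is closed under the shift maps. Your write-up simply spells out the definition-matching more explicitly than the paper, which offers no separate proof beyond the remark that Proposition~\ref{proposition:hpfgp-hsop} ``translates'' to this statement.
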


\subsection{Small bases for standard complements}
In this section, we construct strong bases for the standard
complements in function graph groups. The next lemma gives a simple
characterization of such bases.

\begin{lemma}\label{lemma:sc-strong-base}
Let $\Fg(K)= K \rtimes A$ be a function graph group with action
$\circ$ on $A \times B$.  Let $D=\{ (x_1,y_1), \ldots, (x_\ell,y_\ell)
\}$ be a subset of $A \times B$.  Then $D$ is an ${\cal SC}$-strong
base if and only if for all $Q \in K$, the equation $Q(x_1)=\ldots =
Q(x_\ell)=0$ implies $Q=0$.
\end{lemma}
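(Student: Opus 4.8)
The plan is to reduce the characterization of ${\cal SC}$-strong bases to a single concrete condition on an intersection of cosets, and then translate that condition into the stated vanishing property of $K$ by exploiting shift-invariance. First I would record two structural facts about the family ${\cal SC}$. A direct computation with the multiplication rule of $\Fg(K)$ shows that for any $(R,s)\in\Fg(K)$ one has $(R,s)(0,t)(R,s)^{-1}=(R-a_tR,t)=a_{R,t}$, so conjugating $A_0=\{(0,t):t\in A\}$ by $(R,s)$ yields exactly $A_R$, and conjugating a general $A_Q$ yields $A_{R+a_sQ}\in{\cal SC}$. Hence ${\cal SC}$ is a single conjugacy class, closed under conjugation. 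By the remarks following Definition~\ref{strongbase-def}, this means $D$ is an ${\cal SC}$-strong base if and only if it is $A_0$-strong, and equivalently that
\[
\bigcap_{m\in D} A_0\,G_m = A_0 .
\]
So it remains only to compute this intersection explicitly.

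Next I would compute the stabilizers and the cosets $A_0 G_m$. From $(Q,t)\circ(x,y)=(x+t,y+Q(x+t))$ one reads off $G_{(x,y)}=\{(Q,0):Q\in K,\ Q(x)=0\}$, which is independent of $y$. Multiplying on the left by $A_0$ and writing $R=a_tQ$ (so the constraint $Q(x)=0$ becomes $R(x+t)=0$), I obtain
\[
A_0\,G_{(x,y)}=\{(a_tQ,t):t\in A,\ Q(x)=0\}=\{(R,t):t\in A,\ R\in K,\ R(x+t)=0\}.
\]
Consequently an element $(R,t)$ lies in $\bigcap_i A_0 G_{(x_i,y_i)}$ exactly when $R(x_i+t)=0$ for every $i$. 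The inclusion $A_0\subseteq\bigcap_i A_0 G_{(x_i,y_i)}$ is automatic (take $R=0$, using $e\in G_m$), so the base property reduces to the implication: for every $t\in A$ and every $R\in K$, if $R(x_i+t)=0$ for all $i$ then $R=0$.

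Finally I would show this implication is equivalent to the stated condition that $Q(x_1)=\cdots=Q(x_\ell)=0$ forces $Q=0$. The instance $t=0$ is literally that condition, which settles one direction. For the converse, given $(R,t)$ with $R(x_i+t)=0$ for all $i$, I would set $R'=a_{-t}R$; since $K$ is closed under shifts we have $R'\in K$, and $R'(x_i)=R(x_i+t)=0$ for all $i$, so the hypothesis forces $R'=0$ and hence $R=a_tR'=0$. This shift argument is the one genuinely substantive step: it is what allows the single evaluation condition on $\{x_1,\dots,x_\ell\}$ to control all the shifted conditions simultaneously, and it is precisely where the shift-closure of the function group $K$ enters. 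Everything else is routine bookkeeping with the semidirect-product multiplication, so I expect no real obstacle beyond keeping the conjugation reduction and the shift conventions straight.
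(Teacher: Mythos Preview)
Your proof is correct and follows the same essential route as the paper's: both compute the cosets $A_Q G_{(x_i,y_i)}$ via the action, characterize membership in the intersection by a vanishing condition on $K$, and then invoke shift-invariance of $K$. The one organizational difference is that you first reduce to the single subgroup $A_0$ using the observation that $\mathcal{SC}$ is a conjugacy class, whereas the paper keeps a general $A_Q$ throughout and works with the condition $(a_{-t'}Q'-a_{-t'}Q+Q)(x_i)=0$. Your specialization yields the cleaner condition $R(x_i+t)=0$ and makes the final shift argument transparent; the paper's version is notationally heavier but avoids having to argue that checking $A_0$ alone suffices. One small point worth making explicit in your writeup: the passage from ``$A_0$-strong'' to the single equation $\bigcap_{m\in D}A_0 G_m=A_0$ is not quite immediate from the remarks after Definition~\ref{strongbase-def} alone --- it uses that your computed condition already quantifies over all $t\in A$, which absorbs the ``for all $g$'' in the definition. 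Your argument does establish this, but it would read better if you said so.
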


\begin{proof}
As ${\cal SC}$ is closed under conjugation, by the remarks following
Definition~\ref{strongbase-def}, $D$ is an ${\cal SC}$-strong base if
and only if $\bigcap_{i=1}^\ell A_Q\Fg(K)_{(x_i,y_i)}=A_Q$ for every
$Q\in K$. The statement $(Q',t')\in A_Q\Fg(K)_{(x_i,y_i)}$ is true if
and only if there is a $t_i\in A$ such that $
(Q',t')\circ(x_i,y_i)=a_{Q,t_i}\circ(x_i,y_i)\,.  $ This can be
rewritten as $ (x_i+t',y_i+Q'(x_i+t') )=
(x_i+t_i,y_i-Q(x_i)+Q(x_i+t_i))\,.  $ The equality holds if and only
if $t_i=t'$ and $(a_{-t'}Q'-a_{-t'}Q+Q)(x_i)=0$. Hence, an element
$(Q',t')$ is in the intersection $\bigcap_{i=1}^\ell
A_Q\Fg(K)_{(x_i,y_i)}$ iff $t_i=t'$ and
$(a_{-t'}Q'-a_{-t'}Q+Q)(x_i)=0$ holds for all $i$.

We first prove the "only if" part
of the lemma. To this end, let $D$ be an $A_Q$-strong base and let $R \in
{\cal C}$ be a function such that $R(x_i)=0$ for all $i$. Let
\[
(Q',t')\in \bigcap_{i=1}^\ell A_Q\Fg(K)_{(x_i,y_i)}
\]
be any element. Since $D$ is a base we know that the intersection is
equal to $A_Q$ and from this $(Q',t')=(Q-a_{t'}Q,t')$ follows. We also
know that $(a_{-t'}Q'-a_{-t'}Q+Q)(x_i)=0$ for all $i$ and the same is
true for $a_{-t'}Q'+R-a_{-t'}Q+Q$. Hence, we also have $(Q'+a_{t'}
R,t')$ in the intersection and $(Q'+a_{t'}R,t')=(Q-a_{t'}Q,t')$
follows. We have $(Q',t')=(Q'+a_{t'}R,t')$ and this directly implies
$a_{t'}R=R=0$.

To see the "if" part of the lemma, observe that the second statement
of the lemma, applied to the function $a_{-t'}Q'-a_{-t'}Q+Q$, implies
that $a_{-t'}Q'-a_{-t'}Q+Q=0$.  We apply the shift map $a_{t'}$ to
this equality and we obtain $Q'=Q-a_{t'}Q$. Hence, we have
$(Q',t')=a_{Q,t'}\in A_Q$ and this shows that $D$ is an ${\cal
  SC}$-strong base.
\end{proof}

Combining the statements of this section we obtain the following
result.

\begin{theorem}\label{theorem:hpgptohsp} 
${\HPFGP}(\F_q,1,d)$ can be reduced to ${\HSP}(\Fg(\Fqd), {\cal SC})$
  in polynomial time in $d$ and $ \log q$.
\end{theorem}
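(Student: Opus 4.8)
The plan is to chain together the two structural results already established. By Proposition~\ref{proposition:idontknow}, the problem $\HPFGP(\F_q,1,d)$ is literally the same oracle problem as $\HSOP(\Fg(\Fqd), \F_q\times\F_q, \circ, {\cal SC})$: a function $f$ hides $Q$ for the former exactly when it hides the standard complement $A_Q$ by symmetries for the latter. It therefore suffices to reduce this $\HSOP$ to its related $\HSP(\Fg(\Fqd), {\cal SC})$, and for that I would invoke Proposition~\ref{prop:reduction}, whose two hypotheses are a polynomial-time computable action and an efficiently computable ${\cal SC}$-strong base.

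The first hypothesis is immediate: the shifting action $(Q,t)\circ(x,y)=(x+t,y+Q(x+t))$ requires only field additions and a single evaluation of a polynomial of degree at most $d$, all doable in time polynomial in $d$ and $\log q$.

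The heart of the argument is constructing the strong base, and here Lemma~\ref{lemma:sc-strong-base} does the work for me: a set $D=\{(x_1,y_1),\ldots,(x_\ell,y_\ell)\}$ is an ${\cal SC}$-strong base precisely when the only $Q\in\Fqd$ vanishing at all of $x_1,\ldots,x_\ell$ is $Q=0$. Since this condition constrains only the first coordinates, I would fix $d+1$ pairwise distinct elements $x_1,\ldots,x_{d+1}\in\F_q$ (setting, say, each $y_i=0$) and appeal to the classical fact that a nonzero univariate polynomial of degree at most $d$ has at most $d$ roots. Thus any $Q$ of degree at most $d$ vanishing at $d+1$ distinct points is identically zero, so $D$ is an ${\cal SC}$-strong base of size $d+1$, and it is trivially computable.

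Feeding these two facts into Proposition~\ref{prop:reduction} with $G=\Fg(\Fqd)$, $M=\F_q\times\F_q$ and ${\cal H}={\cal SC}$ yields the reduction: the derived function $f_{\rm HSP}$ is evaluated with $d+1$ oracle queries and $d+1$ action computations, so the overall cost is polynomial in $d$ and $\log q$. The one point requiring care is the existence of $d+1$ distinct field elements, which forces $q\geq d+1$; this is exactly the regime in which the $\HPFGP$ is well-posed, since distinct polynomials of degree at most $d$ define distinct functions on $\F_q$ only when $q>d$. I expect no real obstacle beyond checking that the strong-base condition of Lemma~\ref{lemma:sc-strong-base} reduces cleanly to this counting of roots.
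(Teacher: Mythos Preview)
Your approach matches the paper's: cast $\HPFGP$ as an $\HSOP$ via Proposition~\ref{proposition:idontknow}, build an ${\cal SC}$-strong base from $d+1$ distinct abscissae using Lemma~\ref{lemma:sc-strong-base} and the root bound for degree-$d$ polynomials, and invoke the reduction machinery.

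There is, however, one step you omit that the paper carries out explicitly. The output of the $\HSP$ is a set of generators for the subgroup $A_Q$, not the polynomial $Q$, and a polynomial-time reduction from $\HPFGP$ must efficiently recover the latter from the former. Your phrase ``literally the same oracle problem'' glosses over the fact that the outputs of $\HPFGP$ and of the $\HSOP$ are different objects, related by a bijection whose inverse still has to be computed. The paper does this as follows: elements $(Q-a_{t_1}Q,t_1),\ldots,(Q-a_{t_\ell}Q,t_\ell)$ generate $A_Q$ iff the $t_i$ generate the additive group of $\F_q$, so from the generators one can compute $Q-a_sQ$ for any $s\in\F_q$ via the group law; evaluating at $s$ yields $Q(s)-Q(0)$, and Lagrange interpolation on $d$ such values recovers $Q$ up to its constant term. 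Without this final step your reduction delivers $A_Q$ but not the actual answer to the $\HPFGP$ instance.
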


\begin{proof}
Univariate polynomials of degree $d$ have at most $d$ roots over a
field. Therefore, by Proposition~\ref{proposition:idontknow},
Lemma~\ref{lemma:sc-strong-base} and Lemma~\ref{lemma:reduction}, we
can associate in polynomial time an instance of ${\HPFGP}(\F_q,1,d)$
that hides a polynomial $Q$ with symmetries to an instance of
${\HSP}(\Fg(\Fqd), {\cal SC})$ that hides the subgroup $A_Q$.  Then
the polynomial $Q$ (up to a constant term) can be recovered from
generators for $A_Q$ as follows. The elements
$(Q-a_{t_1}Q,t_1),\ldots,(Q-a_{t_\ell}Q,t_\ell)$ generate $A_Q$ if and
only if $t_1,\ldots,t_\ell$ generate the additive group of $\F_q$. It
follows that for arbitrary $s\in \F_q$, we can efficiently compute
$Q-a_sQ$ using the group operation in $A_Q\leq \Fg(
\F_q^{(d)}[x])$. Substituting $s$ into $Q-a_sQ$ gives $Q(s)-Q(0)$. We
do this for $d$ different values $s\in \F$ and compute $Q-Q(0)$ using
Lagrange interpolation.
\end{proof}

We remark that the group $\Fg(\Fqd)$ is of nilpotency class
$d+1$. However, we can actually give a reduction to the $\HSP$ in a
group of class $d$. To this end, observe that the hidden subgroup is a
conjugate of the complement $\F_q$.  Therefore, it can be found in the
subgroup generated by the commutator of $\F_q$ with $\Fqd$ (this is an
abelian normal subgroup) and the complement $\F_q$. This semidirect
product group has nilpotency class $d$.

Note that semidirect product groups $\Fg(\Fqd)$ are metabelian and
that their exponent is the characteristic of $\F_q$. Therefore, for
fixed characteristic, we can apply Fact~\ref{fact:friedl}~(b) to
obtain the following result.

\begin{corollary}\label{cor:noExceptions}
Assume that $q$ is a power of a fixed prime $p$.  Then
${\HPFGP}(\F_q,1,d)$ can be solved by a quantum algorithm in time
polynomial in $d$ and $ \log q$.
\end{corollary}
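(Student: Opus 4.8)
The plan is to chain the reduction of Theorem~\ref{theorem:hpgptohsp} with the positive $\HSP$ result of Fact~\ref{fact:friedl}~(b), the only real work being to check that $\Fg(\Fqd)$ belongs to the class that Fact~\ref{fact:friedl}~(b) handles. First I would apply Theorem~\ref{theorem:hpgptohsp} to reduce, in time polynomial in $d$ and $\log q$, an instance of $\HPFGP(\F_q,1,d)$ to an instance of $\HSP(\Fg(\Fqd),{\cal SC})$. It then remains to solve this related $\HSP$ efficiently.

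Next I would examine the structure of $G:=\Fg(\Fqd)=\Fqd\rtimes\F_q$. Since $G$ is a semidirect product of the abelian groups $K=\Fqd$ (under polynomial addition) and $A=\F_q$ (under addition), the projection $(Q,t)\mapsto t$ is a homomorphism onto the abelian group $A$ with kernel $K$; hence $G'\subseteq K$. As $K$ is abelian, this shows $G'$ is commutative, the first hypothesis of Fact~\ref{fact:friedl}~(b). For the second hypothesis, note that $K$ is an $\F_q$-vector space, so as an additive group it is an elementary abelian $p$-group of exponent $p$; consequently every element of $G'\subseteq K$ has order dividing $p$, which is bounded by the constant $p$ since $p$ is fixed.

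Having verified both hypotheses, I would invoke Fact~\ref{fact:friedl}~(b) to obtain a quantum algorithm solving $\HSP(G,{\cal C}(G))$ in time polynomial in $\log|G|$. The step demanding care is reconciling the subgroup families: the reduction delivers a promise family ${\cal SC}$, whereas Fact~\ref{fact:friedl}~(b) is stated for the full family ${\cal C}(G)$ of closed subgroups. By Proposition~\ref{proposition:hpfgp-hsop} the standard complements are closed, so ${\cal SC}\subseteq{\cal C}(G)$ and an algorithm correct under the weaker promise $H\in{\cal C}(G)$ a fortiori solves $\HSP(G,{\cal SC})$. Finally, since $|K|=q^{d+1}$ and $|A|=q$ we have $|G|=q^{d+2}$, so $\log|G|=(d+2)\log q$; both the reduction and the $\HSP$ solver run in time polynomial in $d$ and $\log q$, and composing them yields the desired algorithm. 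Since the genuinely hard work is already contained in Theorem~\ref{theorem:hpgptohsp} and Fact~\ref{fact:friedl}~(b), no single step here is difficult; the argument amounts to verifying that $\Fg(\Fqd)$ is metabelian with derived subgroup of exponent $p$, together with the family-matching observation.
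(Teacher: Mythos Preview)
Your proof is correct and follows exactly the paper's approach: reduce via Theorem~\ref{theorem:hpgptohsp}, verify that $\Fg(\Fqd)$ is metabelian with derived subgroup of exponent $p$, and then invoke Fact~\ref{fact:friedl}~(b). One minor remark: the family-matching step you flag as ``demanding care'' is in fact automatic, since in the $\HSP$ setting (action of $G$ on itself by left multiplication) \emph{every} subgroup is closed, so ${\cal SC}\subseteq{\cal C}(G)$ holds trivially and Proposition~\ref{proposition:hpfgp-hsop} (which concerns closure under the shifting action on $A\times B$, a different notion) is not needed here.
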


This corollary allows us to complete Fact~\ref{fact:decker} (b),
because it can be applied to fields of characteristic in the set $E_d$
that were left open.  Since $E_d$ is finite it follows that for fixed
$d$ we can solve ${\HPFGP}(\F_q,1,d)$ in quantum polynomial time for
all finite fields. Together with Fact~\ref{fact:decker} (a) this
improves the overall result of~\cite{DDW09}: the ${\HPFGP}(\F_q,n,d)$
can be solved efficiently for all finite fields when $n$ and $d$ are
constant.  We further improve this result in the next section where we
present a more powerful reduction of the multivariate problem to the
univariate case.

We conclude this section by showing that the HSP for semidirect
product groups of the form $\Z_p^m \rtimes \Z_p$ can be reduced to a
multidimensional analogue of the \HPFGP{}. This HSP is discussed
in~\cite{BCvD05} and it is shown there that the HSP for all possible
subgroups can be reduced to the HSP where the hidden subgroups are
complements of $\Z_p^m$. Let $H$ be such a subgroup. Following
arguments of~\cite{BCvD05}, we show that the cosets of $H$ can be
considered as level sets of a polynomial map  from $\Z_p^{m+1}$ to
$\Z_p^m$ of the form $y-Q(x)$, where $y=(y_1,\ldots,y_m)$ and
$Q(x)=(Q_1(x),\ldots,Q_m(x))$, each $Q_i(x)$ being a univariate
polynomial over $\Z_p$ of degree at most $d$. Here $d\leq {\rm
  min}(m,p)$ depends on the structure of $G$ (actually its nilpotency
class).

To this end, notice that the semidirect product structure is given by
a linear transformation $A$ on $\Z_p^m$.  (This is the action of the
generator $1$ of $\Z_p$ on $\Z_p^m$.) We have $A^p=I$, whence $B=A-I$
satisfies $B^p=(A-I)^p=A^p-I^p=0$. Therefore, there exists a smallest
positive integer $d\leq {\rm min}(m,p)$ such that $B^d=0$.

A subgroup $H_v$ complementary to $\Z_p^m$ in $\Z_p^m\rtimes \Z_p$ 
consists of the powers of an element of the form $(v,1)$ 
for some $v\in \Z_p^m$. With the
map
\begin{equation}\label{def qv}
Q_v:\left\{ \begin{array}{l}\Z_p \to \Z_p^m \\ t\mapsto
\sum_{j=0}^{t-1}A^jv\end{array}\right.
\end{equation}
these powers are the pairs $(Q_v(t),t)$, for $t\in \Z_p$, and the right
cosets of $H_v$ are the sets of the pairs $(Q_v(t)+y,t)$, for $t\in
\Z_p$, where $y\in \Z_p^m$. It turns out 
that the entries of the matrix of $\sum_{j=0}^{t-1}A^j$,
as functions in $t$, are polynomials of degree at most $d$ with
zero constant term (see~\cite{BCvD05}). Therefore,
the same holds for the coordinates $Q_v^{(i)}$ of the vector $Q_v(t)$.
In other words, the map $t\mapsto
Q_v(t)$ is a polynomial map from $\Z_p$ to $\Z_p^m$ of degree $d$ with
zero constant term. Hence, the cosets of $H_v$ are
exactly the level sets for the polynomial map
\begin{equation}\label{eq problem}
(y_1,\ldots, y_m,x)\mapsto (y_1-Q_v^{(1)}(x),\ldots,y_m-Q_v^{(m)}(x))
\end{equation}
from $\Z_p^{m+1}$ to $\Z_p^m$.

It follows that any function on $\Z_p^m\rtimes\Z_p$
that hides the subgroup 
$H_v=\langle (v,1)\rangle$ directly defines an
instance of the $m$-dimensional analogue of
the \HPFGP{} for $Q_v$ as defined in Eq.~(\ref{def qv}). 
If we solve the $m$-dimensional \HPFGP{} for these instances, i.e., if 
we determine $Q_v$, then we obtain $v$ by calculating $v=Q_v(1)$. 

This shows that the HSP of $\Z_p^m \rtimes \Z_p$ can be indeed
efficiently reduced to the $m$-dimensional analogue of the \HPFGP{}.
Plugging $A=\Z_p$, $B=\Z_p^m$ and $K=(\Z_p^{(d)}[x])^m$ into 
Proposition~\ref{proposition:hpfgp-hsop}, we obtain that
this problem can be viewed as an instance of the HSSP
over a semidirect product $K$ with $\Z_p$. 
Here the functions are vectors of univariate polynomials. Therefore,
by Lemma~\ref{lemma:sc-strong-base}, small bases exist and can
be found easily and the reduction to a HSP works. Note, however,
that the new group is in general much bigger than the 
original one. 

These reductions explain why it was possible to construct the algorithm 
of~\cite{DDW09} in close analogy with the pretty good measurement
framework of~\cite{BCvD05} for semidirect product groups.

\subsection{Reduction of multivariate HPGP to univariate case}
The scheme of~\cite{DDW09} for reducing the multivariate HPGP to
the univariate case can be improved with the help of a generalized
Vandermonde matrix.
\begin{theorem}\label{theorem:new-reduction}
An instance of ${\HPFGP}(\F_q,n,d)$ can be reduced to $O\binom{d+n}{n}$
 instances of ${\HPFGP}(\F_q,1,d)$ by a classical algorithm with  running time
polynomial in $\binom{d+n}{n}$. If $d$ is constant then
${\HPFGP}(\F_q,n,d)$ can be solved 
by a polynomial time quantum algorithm.
\end{theorem}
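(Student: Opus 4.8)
The plan is to reduce an instance of $\HPFGP(\F_q,n,d)$ to a collection of univariate instances by cleverly choosing lines (or curves) along which to restrict the hidden polynomial $Q$, then reassembling the coefficients of the multivariate $Q$ from the univariate data. The hidden object is an $n$-variate polynomial $Q$ of degree at most $d$, which has $N=\binom{d+n}{n}$ monomial coefficients. The natural idea is: for a suitably chosen direction vector $c\in\F_q^n$ and offset $b\in\F_q^n$, restrict to the line $x=b+tc$ with parameter $t\in\F_q$, so that $Q(b+tc)$ becomes a univariate polynomial in $t$ of degree at most $d$. Crucially, the $\HPFGP$ oracle $f(x,y)$ for the multivariate problem, when fed inputs $(b+tc,\,y)$, yields exactly an $\HPFGP(\F_q,1,d)$ oracle for the univariate polynomial $t\mapsto Q(b+tc)$, since $y-Q(b+tc)$ is the induced level-set function. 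Thus each line gives one univariate instance, and solving it recovers $Q(b+tc)$ up to its constant term.

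The heart of the argument is a linear-algebra reconstruction step. First I would expand $Q(b+tc)=\sum_{k=0}^d g_k(b,c)\,t^k$, where each coefficient $g_k$ is a known linear combination of the monomial coefficients of $Q$ (weighted by products of the $b_i$ and $c_i$). The univariate subroutine returns these $g_k$ for $k\geq 1$ (the constant term $g_0$ is lost, which is harmless since the whole problem is only defined up to an additive constant in $Q$). The goal is to choose $O(N)$ direction/offset pairs so that the resulting linear system in the unknown coefficients of $Q$ has full rank. This is precisely where the \emph{generalized Vandermonde matrix} enters: evaluating homogeneous pieces along directions $c^{(1)},\dots,c^{(M)}$ produces a matrix whose invertibility follows from a multivariate Vandermonde / interpolation argument. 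The plan is to pick the $c^{(j)}$ (and a few shifts $b$) so that the evaluation map from degree-$\leq d$ $n$-variate polynomials to the tuple of univariate restrictions is injective; standard multivariate interpolation guarantees that $O(N)=O\binom{d+n}{n}$ generic points suffice.

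The main obstacle I anticipate is twofold. First, one must be careful over small fields: a single line carries only $|\F_q|$ sample points, so if $q\leq d$ the univariate restriction cannot even be pinned down, and if $q$ is small relative to $N$ one cannot find enough distinct directions in $\F_q^n$ directly. The standard fix is to pass to an extension field $\F_{q^r}$ with $q^r$ large enough, run the interpolation there, and note that the coefficients of $Q$ lie in $\F_q$ so they are recovered correctly; one must check the oracle can be simulated over the extension, which works because evaluation points in $\F_{q^r}^n$ still determine genuine $\HPFGP(\F_q,1,d)$ instances after an appropriate reparametrization. Second, one must verify that the leading-coefficient information ($g_d$) is actually available from the univariate subroutine and that the lost constant terms never obstruct reconstruction of the genuinely relevant coefficients.

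Once the reduction is in place, the second sentence of the theorem is immediate: for constant $d$ we have $\binom{d+n}{n}=O(n^d)$, so the reduction is polynomial in $n$ and $\log q$, and each univariate instance $\HPFGP(\F_q,1,d)$ is solvable in quantum polynomial time by Theorem~\ref{theorem:hpgptohsp} combined with the solvability of the related $\HSP(\Fg(\Fqd),{\cal SC})$ (as in Corollary~\ref{cor:noExceptions} and the surrounding discussion, covering all fields). Composing the classical reduction with the quantum univariate solver yields a polynomial-time quantum algorithm for the full multivariate $\HPFGP(\F_q,n,d)$, completing the proof.
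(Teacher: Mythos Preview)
Your overall strategy---restrict the hidden $Q$ to lines, solve each resulting univariate $\HPFGP$, and reconstruct the coefficients of $Q$ by inverting a generalized Vandermonde system---is exactly the paper's approach. Two remarks, one minor and one substantive.

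Minor: the paper uses only lines through the origin, $x_i\mapsto v_i t$, so that the homogeneous decomposition $Q=\sum_\ell Q_\ell$ gives $P_v(t)=\sum_\ell Q_\ell(v)\,t^\ell$ directly. No offsets $b$ are needed; this cleans up the linear algebra and makes the Vandermonde structure transparent. Reconstruction then uses one scalar per direction (essentially $Q(v)=P_v(1)$), and the square matrix $M^{(n)}=[m_\alpha(v)]$ is shown invertible for a suitable set $\mathcal V^{(n)}\subset\F_q^n$ of size $|\mathcal I^{(n)}|$.

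Substantive: your proposed fix for small $q$---pass to an extension $\F_{q^r}$ and query along lines there---does not work. The oracle $f$ is a black box defined only on $\F_q^n\times\F_q$; there is simply no way to evaluate it at points of $\F_{q^r}^n\times\F_{q^r}$, and no ``reparametrization'' can simulate such queries without already knowing $Q$. The paper handles small fields differently and entirely inside $\F_q$: since $x^q=x$ on $\F_q$, the polynomial $Q$ is only determined modulo the ideal $(x_1^q-x_1,\ldots,x_n^q-x_n)$, so one restricts the monomial set to exponents with each $\alpha_i\le\min\{d,q-1\}$. With this reduced exponent set $\mathcal I^{(n)}$, the paper gives a deterministic iterative construction of $\mathcal V^{(n)}\subset\F_q^n$ (growing the matrix one row at a time, each new row chosen to kill a kernel vector of the current partial matrix) that makes $M^{(n)}$ full rank. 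This is the step you are missing; once you replace the extension-field detour with this construction, your outline becomes the paper's proof.
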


We prove the theorem in the remainder of the subsection.
For this, we consider $n$-variate polynomials of the special form
\begin{equation}\label{eq hpfgp}
y-Q(x_1,\ldots,x_n)\quad {\rm with} \quad Q \in
\F_q[x_1, \ldots, x_n]\,.
\end{equation}
Note that we changed the notation by replacing the polynomials
$Q(x_1)$ in Def.~\ref{def 1} by polynomials $Q(x_1,\ldots, x_n) \in
\F_q[x_1,\ldots,x_n]$.  This makes the following discussion easier.  Recall
that the constant term of the polynomials $Q(x_1,\ldots,x_n)$ is
assumed to be zero since it cannot be determined.  Furthermore, the
identity $x^q=x$ in $\F_q$ implies that we can only distinguish
polynomials that are reduced modulo $x_i^q-x_i$ for all variables
$x_i$.  Hence, for a maximum total degree $d$ we only consider local
degrees of at most $\min\{d,q-1\}$, i.e., the power of each $x_i$ in
all monomials occuring in $Q(x_1,\ldots,x_n)$ is less or equal to this
minimum.

For each $j$ with $1\le j\le n$ let
\[
\mathcal{I}^{(j)} := \left\{\alpha\in\N^j \, : \, \sum_{i=1}^j
\alpha_i \le d,\, \alpha_i \le \min\{d,q-1\}\mbox{ for
  $i=1,\ldots,j$}\right\} \setminus \{(0,\ldots,0)\}
\]
be the set of all exponent vectors for the monomials of total degree
at most $d$ when the variables are restricted to $x_1,\ldots, x_j$.
For each $\alpha\in\mathcal{I}^{(j)}$ let
\[
m_\alpha:= x_1^{\alpha_1} \, \cdots x_j^{\alpha_j}
\]
denote the corresponding monomial. For $j$ and $j'$ with $1\le j < j'
\le k$, a monomial $m_\alpha$ with $\alpha=(\alpha_1,\ldots,\alpha_j)
\in \mathcal{I}^{(j)}$ is also defined by
$\tilde{\alpha}=(\alpha_1,\ldots,\alpha_j,0,\ldots,0) \in
\mathcal{I}^{(j')}$.  Finally, for $v=(v_1,\ldots,v_j)\in\F_q^j$ let
\[
m_\alpha(v) := v_1^{\alpha_1} \cdot \ldots \cdot v_j^{\alpha_j}
\]
denote the evaluation of the monomial $m_\alpha$ at the point $v$.
For $q>\ell$, the number of such monomials is given by the simple
expression
\[
| \mathcal{I}^{(j)} |= \binom{d + j}{j} - 1\,.
\]
For $q\le\ell$, the number of such monomials is determined with the
inclusion-exclusion principle, which leads to the expression
\[
|\mathcal{I}^{(j)}| = \sum_{i=0}^j (-1)^i \binom{j}{i}
\binom{d - i  q + j}{j} - 1\,.
\]
We use the convention that the binomial coefficient is zero if the
number at the top is negative. With the help of $\mathcal{I}^{(j)}$
we can define the generalized Vandermonde matrix and describe an
efficient construction.

\begin{lemma}\label{lemma:reduce}
Let $d$ be the maximum total degree of the monomials in
$\mathcal{I}^{(j)}$ over the field $\F_q$. Then there
is a  classical algorithm for constructing a set
$\mathcal{V}^{(j)} \subset \F_q^{j}$ of cardinality
$|\mathcal{I}^{(j)}|$ such that the square matrix
\begin{equation}
M^{(j)} := \Big[ \, m_\alpha(v) \,
  \Big]_{v\in\mathcal{V}^{(j)},\,\alpha\in
  \mathcal{I}^{(j)}}
\end{equation}
has full rank.  This matrix is called the generalized Vandermonde
matrix.  The running time of the algorithm is polynomial in
$|\mathcal{I}^{(j)}|$.
\end{lemma}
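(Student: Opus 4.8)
The plan is to produce the points explicitly as a staircase grid indexed by the exponent set itself, and to reduce the full-rank claim to a clean unisolvence statement about lower sets of exponents. Write $D=\min\{d,q-1\}$ and fix distinct field elements $\xi_0=0,\xi_1,\dots,\xi_D\in\F_q$; these exist because $D+1\le q$. I would then take
\[
\mathcal{V}^{(j)}=\{(\xi_{\alpha_1},\dots,\xi_{\alpha_j}) : \alpha\in\mathcal{I}^{(j)}\},
\]
which has exactly $|\mathcal{I}^{(j)}|$ distinct points, since $\alpha\mapsto(\xi_{\alpha_i})_i$ is injective and no $\alpha\in\mathcal{I}^{(j)}$ is zero. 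Assembling this list costs time polynomial in $|\mathcal{I}^{(j)}|$, so the running-time claim is immediate once full rank is established; the entire difficulty lies in proving that $M^{(j)}$ is nonsingular.

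To isolate that difficulty, set $\Lambda=\mathcal{I}^{(j)}\cup\{0\}$, which is a lower set (downward closed under the componentwise order), because decreasing any coordinate preserves both $\sum_i\alpha_i\le d$ and $\alpha_i\le D$. The heart of the argument is the following unisolvence lemma: for any finite lower set $\Lambda\subseteq\{0,\dots,D\}^j$ and distinct nodes $\xi_0,\dots,\xi_D$, the matrix $V_\Lambda=\big[\prod_{k}\xi_{\alpha_k}^{\beta_k}\big]_{\alpha,\beta\in\Lambda}$ is nonsingular; equivalently, a polynomial $P=\sum_{\beta\in\Lambda}c_\beta x^\beta$ vanishing at every grid point $(\xi_{\alpha_1},\dots,\xi_{\alpha_j})$, $\alpha\in\Lambda$, must be identically zero. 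I would prove this by induction on $j$, and for fixed $j$ by induction on $|\Lambda|$, the base case $j=1$ being the classical Vandermonde statement. For the step, slice $\Lambda$ by its last coordinate, writing $\Lambda_t=\{\alpha'\in\N^{j-1}:(\alpha',t)\in\Lambda\}$; these are nested lower sets $\Lambda_0\supseteq\Lambda_1\supseteq\cdots\supseteq\Lambda_S$. Writing $P=\sum_{t=0}^S p_t(x')x_j^t$ with $\mathrm{supp}\,p_t\subseteq\Lambda_t$, the key observation is that for $\alpha'\in\Lambda_S$ the lower-set property forces $(\alpha',t)\in\Lambda$ for all $t\le S$, so the univariate polynomial $y\mapsto P(\xi_{\alpha'_1},\dots,\xi_{\alpha'_{j-1}},y)$ of degree $\le S$ has the $S+1$ distinct roots $\xi_0,\dots,\xi_S$ and hence vanishes. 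In particular $p_S(\xi_{\alpha'_1},\dots,\xi_{\alpha'_{j-1}})=0$ for every $\alpha'\in\Lambda_S$, so the induction hypothesis in $j-1$ variables applied to the lower set $\Lambda_S$ yields $p_S=0$. Deleting the top slice leaves a strictly smaller lower set on which $P$ still vanishes, and the inner induction on $|\Lambda|$ then finishes the proof.

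It remains to pass from the $|\Lambda|\times|\Lambda|$ matrix $V_\Lambda$ back to $M^{(j)}$, which omits both the origin point and the constant monomial. Ordering the distinguished index $0$ first in both rows and columns, every nonconstant monomial vanishes at the origin, so $V_\Lambda$ takes the block form $\left(\begin{smallmatrix}1 & 0\\ * & M^{(j)}\end{smallmatrix}\right)$, whence $\det V_\Lambda=\det M^{(j)}$ and $M^{(j)}$ is nonsingular exactly when $V_\Lambda$ is. The main obstacle is the unisolvence lemma, and specifically making the induction descend correctly through the slices: the slicing must proceed from the top level $S$ downward, so that the lower-set property supplies the full complement of roots $\xi_0,\dots,\xi_S$ needed to annihilate the leading slice before recursing on the remainder.
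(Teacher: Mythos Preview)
Your proof is correct and takes a genuinely different route from the paper. The paper builds $\mathcal{V}^{(j)}$ greedily: it starts from a single point, and at each step computes a kernel vector of the current evaluation matrix by Gaussian elimination, interprets it as a polynomial, and locates (using the already constructed $\mathcal{V}^{(j-1)}$ to handle all but the last variable and then $\mathcal{V}^{(1)}$ for the last) a point where that polynomial does not vanish; adding that point increases the rank by one. Your approach instead writes down the point set in closed form as a staircase grid indexed by the lower set $\Lambda=\mathcal{I}^{(j)}\cup\{0\}$, and proves nonsingularity by the classical unisolvence argument for lower sets (peel off the top $x_j$-slice using a univariate Vandermonde count, kill $p_S$ by the $(j-1)$-variable hypothesis, and recurse on the smaller lower set). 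The block decomposition $\left(\begin{smallmatrix}1&0\\ *&M^{(j)}\end{smallmatrix}\right)$, which hinges on your choice $\xi_0=0$, cleanly transfers nonsingularity from $V_\Lambda$ to $M^{(j)}$.

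What each buys: your construction is explicit and its running time is immediate---listing the points is linear in $|\mathcal{I}^{(j)}|$, with no linear algebra required---whereas the paper pays for $|\mathcal{I}^{(j)}|$ rounds of Gaussian elimination. Your argument also makes transparent exactly which structural property of $\mathcal{I}^{(j)}$ is being used, namely that $\mathcal{I}^{(j)}\cup\{0\}$ is downward closed. Conversely, the paper's iterative scheme is more agnostic: it would continue to work for any monomial family for which one can recursively certify nonvanishing, not only for lower sets, and it naturally produces $\mathcal{V}^{(j)}$ and $\mathcal{V}^{(j-1)}$ in a nested fashion that matches the inductive use made of them elsewhere in the reduction.
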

\begin{proof}
This statement is proved in~\cite{ZV02}. For the sake of completeness
we present here another proof which is also much simpler than the
original one. The condition that $M^{(j)}$ has full rank is equivalent
to the following condition: for every (non-zero) polynomial
\begin{equation}
F(x_1,\ldots,x_j) = \sum_{\alpha\in\mathcal{I}^{(j)}} c_\alpha
m_\alpha
\end{equation}
there is at least one $v\in\mathcal{V}^{(j)}$ such that $F(v)\neq 0$.

Let $b:=\min\{d,q-1\}$ denote the upper bound on the local
degrees.  For $j=1$, we have $\mathcal{I}^{(1)}=\{(1),\ldots,(b)\}$
and the corresponding set of monomials is
$\{x_1,x_1^2,\ldots,x_1^b\}$. We can choose
$\mathcal{V}^{(1)}:=\{v_1,v_2,\ldots,v_b\}$ to be a set containing $b$
different non-zero elements of $\F_q$.  Then the matrix
\[
M^{(1)}=\left(
\begin{array}{ccccc}
v_1^1  & v_1^2  & \cdots & v_1^b \\
v_2^1  & v_2^2  & \cdots & v_2^b \\
\vdots & \vdots & \ddots & \vdots \\
v_b^1  & v_b^2  & \cdots & v_b^b 
\end{array}
\right)
\]
has the full rank $|\mathcal{I}^{(1)}|=b$.  Observe that
 we obtain a (square) Vandermonde matrix by
multiplying $M^{(1)}$ with
$\mathrm{diag}(v_1^{-1},v_2^{-1},\cdots,v_b^{-1})$ from the left.  We
choose $v_1$ to be equal to $1$.

Assume that we have already determined a suitable
$\mathcal{V}^{(j-1)}$ for some $j\ge 2$.  We show how to obtain
$\mathcal{V}^{(j)}$ using $\mathcal{V}^{(j-1)}$.  

\begin{enumerate}
\item Set $\mathcal{V}^{(j)} \leftarrow \{(1,\ldots,1)\}\subseteq\F_q^j$
\item Set 
\begin{equation}
L^{(j)} \leftarrow \Big[ \, m_\alpha(v) \, \Big]_{v\in\mathcal{V}^{(j)},\,\alpha\in
  \mathcal{I}^{(j)}}
\end{equation}
\item \texttt{REPEAT}
\item \hspace{0.5cm} Determine a (non-trivial) vector
  $c=(c_\alpha)\in\F_q^{|\mathcal{I}^{(j)}|}$ in the kernel of
  $L^{(j)}$
\item \hspace{0.5cm} Set
\[
G(x_1,\ldots,x_j) \leftarrow \sum_{\alpha\in \mathcal{I}^{(j)}}
c_\alpha m_\alpha
\]
\item \hspace{0.5cm} Determine a vector $u\in\F_q^j$ such that $G(u)\neq
  0$
\item \hspace{0.5cm} Set $\mathcal{V}^{(j)} \leftarrow \mathcal{V}^{(j)} \cup \{u\}$
\item \hspace{0.5cm} Add the row vector $\big(m_\alpha(u)\big)_{\alpha\in\mathcal{I}^{(j)}}$ at the bottom of $L^{(j)}$
\item \texttt{UNTIL} the rank of $L^{(j)}$ is maximal
\end{enumerate}

We now explain how the different steps can be implemented efficiently
and why the algorithm produces a valid $\mathcal{V}^{(j)}$.

We can compute a non-trivial vector $c$ in the kernel of $L^{(j)}$
in step 4 with Gaussian elimination.  To find a $u$ with $G(u)\neq
0$ in step 5, we write $G$ in the form
\[
G(x_1,\ldots,x_j) = \sum_{i=1}^b F_i(x_1,\ldots,x_{j-1}) \cdot
x_j^i\in \F_q[x_1,\ldots,x_{j-1}][x_j]\,.
\]
At least one of the polynomials $F_i$ is non-zero because $G$ is
non-zero.  Set $F$ to be the non-zero $F_i$ with the smallest $i$.  We
can write $F$ as
\[
F(x_1,\ldots,x_{j-1}) = \sum_{\beta \in \mathcal{I}^{(j-1)}} d_{\beta}
m_{\beta} \in \F_q[x_1,\ldots,x_{j-1}]
\]
with appropriate coefficients $d_{\beta} \in \F_q$.  There exists a
vector $v=(v_1,\ldots,v_{j-1})\in\mathcal{V}^{(j-1)}$ with $F(v)\neq
0$.  This is because otherwise we would have a non-trivial linear dependency of
the rows of $L^{(j-1)}$ corresponding to the elements in
$\mathcal{V}^{(j-1)}$.  Hence, the polynomial
$P(x):=G(v_1,\ldots,v_{j-1},x)$ is a non-zero univariate polynomial
that can be written as a linear combination of monomials $m_\gamma$
with $\gamma\in\mathcal{I}^{(1)}$.  The element $w\in\F_q$ with
$P(w)\neq 0$ can be found among the elements of $\mathcal{V}^{(1)}$.
We obtain the desired vector $u$ by setting it equal to
$(v_1,\ldots,v_{j-1},w)$.

By adding the new row vector to $L^{(j)}$ in step 8 we achieve that
the vector $c$ is no longer in the kernel of the new augmented matrix.
Hence, we reduced the dimension of the kernel of the linear map
defined by this matrix.  In other words, we have increased its rank by
exactly $1$.  This shows that the algorithm terminates.
\end{proof}
We are now ready to describe the improved reduction.

\begin{lemma}
\label{lemma:vand}
Let $\mathcal{V}^{(n)}$ be as in Lemma~\ref{lemma:reduce}.  Then the
coefficients of the hidden polynomial of Eq.~(\ref{eq hpfgp}) can be
determined by solving the univariate \HPFGP{} for the polynomials
$Q(v_1 x, v_2 x,\ldots, v_n x)$ for all $v\in\mathcal{V}^{(n)}$.
\end{lemma}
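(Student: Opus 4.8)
The plan is to reconstruct $Q$ by polynomial interpolation, exploiting the invertibility of the generalized Vandermonde matrix $M^{(n)}$ from Lemma~\ref{lemma:reduce}. I would write the hidden polynomial in the monomial basis supported on $\mathcal{I}^{(n)}$,
\[
Q(x_1,\ldots,x_n)=\sum_{\alpha\in\mathcal{I}^{(n)}}c_\alpha\,m_\alpha,
\]
so that determining $Q$ amounts to recovering the coefficient vector $\mathbf{c}=(c_\alpha)_{\alpha\in\mathcal{I}^{(n)}}$. The crucial observation is that evaluating $Q$ at a point $v\in\F_q^n$ gives $Q(v)=\sum_{\alpha}c_\alpha m_\alpha(v)$, so that the vector $(Q(v))_{v\in\mathcal{V}^{(n)}}$ equals $M^{(n)}\mathbf{c}$. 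Since $M^{(n)}$ is a square matrix of full rank, it is invertible, and therefore it suffices to compute the single value $Q(v)$ for each of the $|\mathcal{V}^{(n)}|=|\mathcal{I}^{(n)}|$ points $v\in\mathcal{V}^{(n)}$; then $\mathbf{c}=(M^{(n)})^{-1}(Q(v))_v$ is obtained by one application of Gaussian elimination.

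It remains to extract each value $Q(v)$ from a univariate \HPFGP{} instance. First I would restrict the given multivariate oracle to the line through the origin in direction $v$: given the oracle $f$ on $\F_q^n\times\F_q$ that hides $Q$, define $g_v(x,y)=f(v_1x,\ldots,v_nx,y)$ for $(x,y)\in\F_q\times\F_q$. From the defining property of the \HPFGP{} oracle one checks directly that
\[
g_v(x,y)=g_v(x',y')\iff y-R_v(x)=y'-R_v(x'),\qquad R_v(x):=Q(v_1x,\ldots,v_nx),
\]
so $g_v$ is a legitimate oracle for $\HPFGP(\F_q,1,d)$ hiding the univariate polynomial $R_v$, whose degree in $x$ is at most the total degree $d$ of $Q$. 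Solving this univariate instance returns $R_v$ up to an additive constant; since $R_v(0)=Q(0,\ldots,0)=0$ by our zero-constant-term normalization, the constant is pinned down and $R_v$ is recovered exactly. Evaluating at $x=1$ then yields $R_v(1)=Q(v)$, which is precisely the datum needed above.

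The step requiring the most care is the interaction between the line restriction and the reduction modulo $x^q-x$. When $d\ge q$ the univariate oracle can only determine $R_v$ as a function, i.e. as a polynomial of degree at most $q-1$, so the individual homogeneous pieces $\sum_{|\alpha|=k}c_\alpha m_\alpha(v)$ for $k\ge q$ are not separately visible. This causes no difficulty for the interpolation above, because I only use the value $R_v(1)$, and evaluation at the point $1$ commutes with reduction modulo $x^q-x$ (as $1^q=1$); hence the recovered reduced polynomial still evaluates to $Q(v)$ at $x=1$. Collecting these values over all $v\in\mathcal{V}^{(n)}$ and inverting $M^{(n)}$ recovers $\mathbf{c}$, and hence $Q$, after $|\mathcal{I}^{(n)}|$ calls to the univariate \HPFGP{} together with a polynomial amount of linear algebra.
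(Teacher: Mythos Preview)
Your proposal is correct and follows essentially the same route as the paper: restrict the oracle to each line $x_i=v_ix$, solve the resulting univariate \HPFGP{} to obtain $R_v$, read off $Q(v)=R_v(1)$, and then invert the full-rank generalized Vandermonde matrix $M^{(n)}$ to recover the coefficient vector. The paper phrases the same computation via the homogeneous decomposition $P_v(x)=\sum_\ell Q_\ell(v)x^\ell$ and then forms $y_v=\sum_\ell Q_\ell(v)=Q(v)$, which is exactly your $R_v(1)$; your added checks that the restricted oracle is a valid \HPFGP{} oracle and that the reduction modulo $x^q-x$ does not disturb the value at $x=1$ are welcome details the paper leaves implicit.
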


\begin{proof}
The unknown polynomials can be expressed as
\[
Q(x_1,\ldots,x_n) = \sum_{\ell=1}^{d} Q_\ell(x_1,\ldots,x_n)
\,,
\]
where $Q_\ell$ denotes the homogeneous part of total degree $\ell$.

For each $v=(v_1,\ldots,v_n)\in\F_q^n$, the substitution $x_i \mapsto
v_i x$ in the hidden multivariate polynomial $Q$ leads to the
univariate polynomial
\[
P_v(x) := Q(v_1 x,\ldots,v_n x) = \sum_{\ell=1}^{d} Q_\ell(v) x^\ell\,.
\]

We determine the coefficients $Q_\ell(v)$ of $P_v(x)$ 
by using the
quantum algorithm for the univariate case.  Let
$z=[q_\alpha]_{\alpha\in \mathcal{I}^{(n)}}^T$ be the column vector
whose entries are the unknown coefficients we seek to learn.  Let
$y=[Q_1(v)+\ldots+Q_d(v)]_{v\in\mathcal{V}^{(n)}}$ be the column
vector whose entries are the sum of evaluations of the homogeneous
part $Q_\ell$ at the points $v\in\mathcal{V}^{(n)}$.  We have $M^{(n)}
z = y$.  Hence, we can recover $y$ since the generalized Vandermonde
matrix $M^{(n)}$ has full rank.
\end{proof}

Theorem~\ref{theorem:new-reduction} follows directly from
Lemma~\ref{lemma:vand}. Note that in the course of the above
reduction, we learn $d|\mathcal{I}^{(n)}| \log_2(q)$ bits by solving
$|\mathcal{I}^{(n)}|$ instances of the univariate case (each instance
yielding exactly $d$ coefficients in $\F_q$).  The absolute lower
bound is given by $|\mathcal{I}^{(n)}|\log_2(q)$, which corresponds to
the number of bits necessary to specify all coefficients of the hidden
polynomial $Q$.  This discussion shows that our method is optimal up
to the factor $d$.

Theorem~\ref{theorem:new-reduction} also gives an instance
of the HSSP which is solvable in quantum polynomial time,
although no small strong bases exist (and therefore 
the reduction scheme of Section~\ref{section:red} does not
work directly). 
Let $A$ and  $B$ be the additive group of $\F_q^n$ and
$\F_q$, respectively, and
and let $K$ be $\F_q^{(d)}[x_1,\ldots,x_n]$, 
the set of polynomials in $n$ variables of total degree at most
$d$. Then
Proposition~\ref{proposition:hpfgp-hsop} translates to the following
statement.
\begin{proposition}\label{proposition:idontknow2}
Let $f : \F_q^n \times \F_q\rightarrow S$ be a function. Then $f$
hides for $\HPFGP(\F_q, n, d)$ the polynomial $Q$ if and only if for
$\HSOP(\Fg(\F_q^{(d)}[x_1,\ldots,x_n]), \F_q^n \times \F_q, \circ, {\cal
  SC})$ it hides the standard complement $A_Q$ by symmetries.
\end{proposition}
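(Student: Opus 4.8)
The plan is to read Proposition~\ref{proposition:idontknow2} as the specialization of Proposition~\ref{proposition:hpfgp-hsop} to the concrete data $A=\F_q^n$, $B=\F_q$ (both as additive groups) and $K=\F_q^{(d)}[x_1,\ldots,x_n]$, exactly as was done for the univariate case in Proposition~\ref{proposition:idontknow}. First I would check that this data genuinely defines a function graph group. Here $K$ is the set of functions $\F_q^n\to\F_q$ given by polynomials of total degree at most $d$; it is a subgroup of the group ${\cal F}$ of all such functions because the sum or difference of two polynomials of total degree at most $d$ again has total degree at most $d$.

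The one point that needs real verification is closure under the shift maps $a_t$, $t\in\F_q^n$. Since $(a_tQ)(x)=Q(x-t)$ and the substitution $x_i\mapsto x_i-t_i$ in a monomial $x_1^{\alpha_1}\cdots x_n^{\alpha_n}$ yields a polynomial of total degree at most $\alpha_1+\cdots+\alpha_n$, shifting never raises the total degree, so $a_tK\subseteq K$. As in the univariate remark, this is precisely why the nonzero constant terms must be kept in $K$: a shift of a polynomial of total degree at most $d$ with zero constant term typically acquires a nonzero constant. Hence $\Fg(K)=K\rtimes_{t\mapsto a_t}\F_q^n$ is a well-defined function graph group with shifting action $\circ$ on $\F_q^n\times\F_q$, and each $A_Q$ with $Q\in K$ is a standard complement in ${\cal SC}$.

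With the hypotheses in place I would simply invoke Proposition~\ref{proposition:hpfgp-hsop}: for every $Q\in K$ the subgroup $A_Q$ is closed, and its orbits under $\circ$ are exactly the level sets of the map $(x,y)\mapsto y-Q(x)$ on $\F_q^n\times\F_q$. It then remains to unwind the two notions of ``hiding''. By the definition of $\HPFGP(\F_q,n,d)$, saying that $f$ hides $Q$ means $f(x_1,y_1)=f(x_2,y_2)\Longleftrightarrow y_1-Q(x_1)=y_2-Q(x_2)$, i.e.\ that $\pi_f$ is the partition into level sets of $(x,y)\mapsto y-Q(x)$. By the definition of the HSSP, saying that $f$ hides $A_Q$ by symmetries means that $\pi_f$ is the orbit partition of $A_Q$. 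Proposition~\ref{proposition:hpfgp-hsop} identifies these two partitions, so the two conditions on $f$ coincide, which is the claimed equivalence; closedness of $A_Q$ moreover ensures that $A_Q\in{\cal SC}$ is a legitimate output of the HSSP.

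I do not expect a genuine obstacle: the mathematical content sits entirely in Proposition~\ref{proposition:hpfgp-hsop}, and the present statement follows by instantiation together with a routine reading of the two problem definitions. The only step calling for a moment's care is the shift-closure of $K$, and even that reduces to the elementary fact that the change of variables $x_i\mapsto x_i-t_i$ does not increase total degree.
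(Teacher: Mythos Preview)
Your argument is correct and follows exactly the paper's approach: the paper states Proposition~\ref{proposition:idontknow2} as the direct specialization of Proposition~\ref{proposition:hpfgp-hsop} to $A=\F_q^n$, $B=\F_q$, $K=\F_q^{(d)}[x_1,\ldots,x_n]$, without a separate proof. Your additional verification that $K$ is shift-closed (and hence a genuine function group) is a welcome detail that the paper leaves implicit.
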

Therefore, by Theorem~\ref{theorem:new-reduction}, for constant $d$,
$\HSOP(\Fg(\F_q^{(d)}[x_1,\ldots,x_n]), \F_q^n \times \F_q, \circ, {\cal
  SC})$ can be solved in quantum polynomial time. On the other hand,
as multivariate polynomials have many zeros, there are no bases of
polynomial size for the action $\circ$ for $n\geq 2$, see
Lemma~\ref{lemma:sc-strong-base}.

\section*{Acknowledgments}
The authors are grateful to the anonymous referees for their helpful
remarks and suggestions on a previous version of the paper. Most of
this work was conducted when T.D., G.I. and M.S. were at the Centre
for Quantum Technologies (CQT) in Singapore, and partially funded by
the Singapore Ministry of Education and the National Research
Foundation. Research partially supported by the European Commission
IST STREP project Quantum Computer Science (QCS) 25596, by the French
ANR program under contract ANR-08-EMER-012 (QRAC), by the French MAEE
STIC-Asie program FQIC, and by the Hungarian Scientific Research Fund
(OTKA). P.W. gratefully acknowledges the support from the NSF grant
CCF-0726771 and the NSF CAREER Award CCF-0746600.


\begin{thebibliography}{}
\bibitem{babai}L. Babai. Local expansion of vertex-transitive graphs
  and random generation in finite groups.  \newblock In: {\em
    Proceedings of the 23rd ACM Symposium on Theory of Computing
    (STOC)}, pp.~164--174, 1991.

\bibitem{BBBV97}C.~Bennett, E.~Bernstein, G.~Brassard, U.~Vazirani.
  Strengths and Weaknesses of Quantum Computing. {\em SIAM Journal on
    Computing}, Vol.~26, pp.~1510--1523, 1997.

\bibitem{BCvD05}
D.~Bacon, A.~Childs and W.~van~Dam.
\newblock From optimal measurement to efficient quantum algorithms
for the hidden subgroup problem over semidirect product groups.
\newblock In: {\em Proceedings of the 46th 
 IEEE Symposium on Foundations of Computer Science (FOCS)}, pp.~469--478, 2005.


\bibitem{Blyth} T.S. Blyth.  \newblock Lattices and Ordered Algebraic
  Structures.  \newblock Springer, 2005.

\bibitem{BoLi}
D.~Boneh and R.~Lipton.
Quantum cryptanalysis of hidden linear functions.
In: {\em Proceedings of Crypto'95}, LNCS, Vol.~963, pp.~427--437, 1995.


\bibitem{CSV} A.~Childs, L.~Schulman, U.~Vazirani. Quantum
  Algorithms for Hidden Nonlinear Structures.  
  \newblock In: {\em
    Proceedings of the 48th IEEE Symposium on Foundations of Computer
    Science (FOCS)}, pp.~395--404, 2007.

\bibitem{Camion83}
P.~Camion, 
\newblock Improving an algorithm for factoring polynomials over finite
fields and constructing large irreducible polynomials. 
\newblock
{\em IEEE Trans. Inform. Theory}, Vol.~IT-29, No.~3, pp.~378--385 (1983)

\bibitem{vDHI03}
W.~van~Dam, S.~Hallgren and L.~Ip,
\newblock
Quantum algorithms for some hidden shift problems.
\newblock In:
{\em Proceedings of the 14th annual ACM-SIAM Symposium On Discrete
Algorithms (SODA)}, pp.~489--498, 2003.



\bibitem{DDW09} T.~Decker, J.~Draisma, and P.~Wocjan. 
\newblock
  Quantum algorithm for identifying hidden polynomial function graphs.
  \newblock {\em Quantum Information and Computation}, Vol.~9,
   pp.~0215--0230, 2009.

 \bibitem{DMR10}
 A.~Denney, C.~Moore and A.~Russell.
 \newblock
 Finding conjugate stabilizer subgroups in $PSL(2; q)$ and related groups.
 \newblock
{\em Quantum Information and Computation}, Vol.~10, pp.~282--291, 2010.

\bibitem{EHK04}
M.~Ettinger, P.~H{\o}yer and E.~Knill.
\newblock The quantum query complexity of the hidden subgroup problem is polynomial.
\newblock {\em Information Processing Letters}, Vol.~91, No.~1, 
pp.~43--48, 2004.

\bibitem{FIMSS03}
K.~Friedl, G.~Ivanyos, F.~Magniez, M.~Santha and P.~Sen.
\newblock Hidden translation and orbit coset in quantum computing.
\newblock In: {\em Proceedings of the 35th 
ACM Symposium on Theory of Computing (STOC)}, pp.~1--9, 2003.

\bibitem{gsvv01}
M. Grigni, L. Schulman, M. Vazirani, U. Vazirani,
\newblock Quantum mechanical algorithms for the nonabelian
          {Hidden Subgroup Problem}.
\newblock In: {\em Proceedings of the 33rd
ACM Symposium on Theory of Computing (STOC)}, pp.~68--74, 2001.

\bibitem{hrt03}
S. Hallgren, A. Russell, A. {Ta-Shma}.
\newblock Normal subgroup reconstruction and quantum computation
          using group representations.
\newblock {\em SIAM Journal on Computing}, Vol.~32, No.~4, 
pp.~916--934, 2003.

\bibitem{hmrrs}
S. Hallgren, C. Moore, M. R\"otteler, A. Russell, P. Sen.
Limitations of quantum coset states for graph isomorphism. 
\newblock In: {\em Proceedings of the 38th 
ACM Symposium on Theory of Computing (STOC)}, pp.~604--617, 2006.

\bibitem{Huppert}
B. Huppert. Endliche Gruppen I. Springer, 1983. 

\bibitem{Kitaev}
A.~Y.~Kitaev.
\newblock Quantum measurements and the Abelian Stabilizer Problem.
arXiv:quant-ph/9511026v1, 1995.

\bibitem{Kuperberg}
G.~Kuperberg.
\newblock A Subexponential-Time Quantum Algorithm for the 
Dihedral Hidden Subgroup Problem.
\newblock {\em SIAM Journal on Computing}, Vol.~35, pp.~170--188, 2005.

\bibitem{MRRS04}
C.~Moore, D.~Rockmore, A.~Russell and L.~J.~Schulman.
\newblock
The power of basis
selection in Fourier sampling: Hidden subgroup problems in affine groups.
\newblock
In: {\em Proceedings of the 15th
Annual ACM-SIAM Symposium on Discrete Algorithms (SODA)}, pp.~1113--1122, 2004.

\bibitem{MRRS07}
C.~Moore, D.~Rockmore, A.~Russell and L.~J.~Schulman.
\newblock
The power of strong Fourier sampling: quantum algorithms for affine groups
and hidden shifts.
\newblock {\em SIAM Journal on Computing}, Vol.~37, No.~3, 
pp.~938--958, 2007.

\bibitem{MRS}
C.~Moore, A.~Russell and L.~J.~Schulman.
\newblock The symmetric group defies strong Fourier sampling.
\newblock In: {\em
    Proceedings of the 46th IEEE Symposium on Foundations of Computer
    Science (FOCS)}, pp.~479--488, 2005.

\bibitem{Regev}
O. Regev.
\newblock Quantum Computation and Lattice Problems.
\newblock {\em SIAM Journal on Computing}, Vol.~33, No.~3, 
pp.~738--760, 2004.

\bibitem{Seress}
A. Seress.
\newblock Permutation Group Algorithms.
\newblock Cambridge University Press, 2003.

\bibitem{sho97}
P.~Shor.
\newblock Algorithms for quantum computation: Discrete logarithm and factoring. 
\newblock {\em SIAM Journal on Computing}, Vol.~26, No.~5, pp.~1484--1509, 1997.

\bibitem{ZV02}
Z.~Zilic and Z.~G.~Vranesic. 
\newblock A Deterministic Multivariate Polynomial Interpolation Algorithm for Small Finite Fields.
\newblock {\em IEEE Transactions on Computers}, Vol.~37, No.~2, 
pp.~1100-1105, 2002. 

\end{thebibliography}
\end{document}